\newcommand{\keywords}[1]{\par\addvspace\baselineskip
\noindent\keywordname\enspace\ignorespaces#1}
\newtheorem{ex}{Example}
\spnewtheorem{contribution}{Contribution}{\bfseries}{\itshape}
\newcommand{\rmv}[1]{}
\newcommand{\Z}{{\mathbb Z}}
\newcommand{\R}{{\mathbb R}}
\newcommand{\C}{{\mathbb C}}
\newcommand{\Q}{{\mathbb Q}}
\newcommand{\K}{{\mathbb K}}
\renewcommand{\L}{{\mathcal L}}
\renewcommand{\b}{{\mathbf b}}
\newcommand{\B}{{\mathbf B}}
\renewcommand{\O}{{\mathcal O}}
\newcommand{\fp}{\mathfrak{p}}
\newcommand{\fa}{\mathfrak{a}}
\begin{document}
\mainmatter
\title
{Principal ideal problem and \\ideal shortest vector over rational primes \\ in power-of-two cyclotomic fields }

\author{Gaohao Cui\inst{1,2,5} and Jianing Li \inst{3,4,5} and Jincheng Zhuang\inst{1,2,5}}
\institute{
School of Cyber Science and Technology, Shandong University, \\ Qingdao 266237, China.
\and Key Laboratory of Cryptologic Technology and Information Security of Ministry of Education, Shandong University, Qingdao, 266237, China
\and Research Center for Mathematics and Interdisciplinary Sciences
Shandong University, Qingdao 266237, China
\and Frontiers Science Center for Nonlinear Expectations, Ministry of Education, Qingdao 266237, China
\and State Key Laboratory of Cryptography and Digital Economy Security, \\ Shandong University, Qingdao, 266237, China}

\maketitle

\begin{abstract}

    The shortest vector problem (SVP) over ideal lattices is closely related to the Ring-LWE problem, which is widely used to build post-quantum cryptosystems. Power-of-two cyclotomic fields are frequently adopted to instantiate Ring-LWE. Pan et al. (EUROCRYPT~2021) explored the SVP over ideal lattices via the decomposition fields and, in particular determined the length of the shortest vector in prime ideals lying over rational primes $p\equiv3,5\pmod{8}$ in power-of-two cyclotomic fields via explicit construction of reduced lattice bases. 

    In this work, we first provide a new method (different from analyzing lattice bases) to analyze the length of the shortest vector in prime ideals in $\mathbb{Z}[\zeta_{2^{n+1}}]$ when $p\equiv3,5\pmod{8}$. Then we precisely characterize the length of the shortest vector in the cases of $p\equiv7,9\pmod{16}$. Furthermore, we derive a new upper bound $\sqrt[4]{2^{2n+1}p}$ for this length, which is tighter than the bound $2^n\sqrt[4]{p}$ obtained from Minkowski's theorem. Our key technique is to investigate whether a generator of a principal ideal can achieve the shortest length after embedding as a vector. If this holds for the ideal, finding the shortest vector in this ideal can be reduced to finding its shortest generator.

\keywords{Ideal lattice, Ring-LWE, Principal ideal, Cyclotomic fields}
\end{abstract}

\section{Introduction}
Lattice-based cryptography has emerged as a cornerstone of post-quantum cryptography, offering a robust alternative to classical public-key systems that are vulnerable to attacks by quantum computers.
The security of these schemes is fundamentally rooted in the computational difficulty of solving hard lattice problems. In particular, several central problems include NTRU problem proposed by Hoffstein, Pipher and Silverman \cite{HoffsteinPS98}, SIS problem proposed by Ajtai \cite{Ajtai96} and the 
learning with errors (LWE) problem introduced by Regev \cite{Regev09}. 

The efficiency of NTRU benefits from its ring structure.  In order to utilize algebraic structures to improve efficiency, Micciancio \cite{MicciancioP12} proposed instantiating SIS instances from rings, Stehl{\'{e}} et al. \cite{StehleSTX09}
and Lyubashevsky et al. \cite{LPR10} proposed instantiating LWE instances from rings. Their hardness is closely related to ideal lattice SVP.
The ideal lattice problem is also used in constructing other schemes, such as fully homomorphic encryption (FHE) schemes \cite{smart2010fully}. 

The ideal lattices derived from power-of-two cyclotomic fields are especially popular due to their desirable structural properties, which facilitate fast polynomial multiplication and efficient key generation. The theoretical security of these schemes is inextricably linked to our ability to understand and characterize the behavior of shortest vectors in these specific lattices. A deeper understanding of these problems not only strengthens our cryptographic foundations but also guides the design of more secure and efficient cryptographic protocols.

\subsection{Related works}
Given the importance of the ideal lattice, there are several approaches to solve the shortest vector problem (SVP) over these structure lattices. 

One line of research is to study the principal ideal problem (PIP), which has several variants including decision version (to decide whether a given ideal is principal), search version (to find a generator given a principal ideal) and optimization version (to find a short generator given a principal ideal).
Biasse et al.~\cite{biasse2017computing} proposed a classical subexponential time heuristic algorithm to find a generator in cyclotomic integer rings.
Biasse and Song~\cite{BS16} designed a polynomial time quantum algorithm to find a generator over arbitrary classes of number fields based on the work of~\cite{EHKS14}.
Cramer, Ducas, Peikert, and Regev~\cite{CDPR16} showed how to find a shorter generator given a random generator using the log-unit lattice in prime-power cyclotomic fields.
Given a cyclotomic number field of prime power conductor with degree $n$,
Cramer, Ducas and Wesolowski~\cite{CDW17} designed a quantum polynomial-time algorithm to solve $2^{\tilde{O}(\sqrt{n})}-$ideal SVP.

Another line of research is to solve the ideal SVP using the subfield attack.
Bernstein proposed logarithm-subfield attack against ideal lattices \cite{B14}. There are several works to attack overstretched NTRU, such as Albrecht, Bai and Ducas \cite{ABD16}, Cheon, Jeong and Lee \cite{CJL16}, and Kirchner and Fouque \cite{KF17}.
    
As a special case of subfield attack, Pan et al. \cite{PXWC21} established a connection between the complexity of solving the ideal SVP of a prime ideal in a number field and the properties of the prime's decomposition group. Using this insight, they were able to explicitly determine the length of the shortest vector for prime ideals corresponding to two specific classes of rational primes $p\equiv \pm 3 \pmod 8$.

Dong et al., building on the work in \cite{dong2022subfield}, proposed a subfield attack on Hermite-SVP in ideal lattices over arbitrary number fields, and generalized the decomposition fields to arbitrary subfields. Porter et al. \cite{porter2023subfield} extended Pan et al.'s work to more classes of ideals. Then, Boudgoust et al. \cite{boudgoust2022some} generalized the work of \cite{PXWC21} and \cite{porter2023subfield} and showed that ideal SVP can be solved efficiently for ideal lattices with a lot of symmetries.

\subsection{Research gap and our results}
   
Pan et al. explicitly determined the length of the shortest vector for prime ideals corresponding to two specific classes of rational primes $p\equiv \pm 3 \pmod 8$. They also proposed the following open problem.
   \begin{quote}
       \textit{It is an interesting problem to study the length of the shortest vectors in other prime ideals.}
   \end{quote}

    For the two specific classes of rational primes $p\equiv \pm 3 \pmod 8$, Pan et al. analyzed the length of the shortest vector for prime ideals in $\Z[\zeta_{2^{n+1}}]$ through reducing it to determining the length of the shortest vector in the lattice generated by prime ideals in $\Z[i]$ or $\Z[\zeta_8]$. However, for classes of rational primes $p\equiv\pm1\pmod{8}$, although we can still reduce the problem to determining the length of the shortest vector in some lower dimension lattice, it is difficult to directly find an explicit relationship between $p$ and the length by analyzing the lattice bases.
    
   In this work, we combine the principal ideal problem approach and the decomposition subfield approach to investigate the length of the shortest vector in prime ideals of $\Z[\zeta_{2^{n+1}}]$ lying over $p\equiv7,9\pmod{16}$. Our contributions are as follows:
   \begin{enumerate}
       \item We consider the problem: given a principal ideal, must the \textit{shortest vector} in an ideal lattice correspond to the \textit{shortest generator} (SVSG problem for short)? 
    In general, the answer is negative as shown by the following example: In $(4+\sqrt{14})\Z[\sqrt{14}]$, $2$ is the shortest vector, but $2$ is not a generator.
    However, for ideals in $\Z[i]$, $\Z[\zeta_8]$, $\Z[\sqrt{2}]$ or $\Z[\zeta_{16}+\zeta_{16}^7]$, we show that the answer to the corresponding SVSG problem is positive, and obtain the following Theorem~\ref{theorem:shortgenerator}.

    \begin{theorem} $\label{theorem:shortgenerator}$
    For ideals in $\Z[i]$, $\Z[\zeta_8]$, $\Z[\sqrt{2}]$ or $\Z[\zeta_{16}+\zeta_{16}^7]$, finding the shortest vectors in these ideals is equivalent to finding their shortest generators.
    \end{theorem}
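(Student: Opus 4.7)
The plan is to prove, for each of the four rings $R \in \{\Z[i], \Z[\zeta_8], \Z[\sqrt{2}], \Z[\zeta_{16}+\zeta_{16}^7]\}$ and every principal ideal $I = \alpha R$, that the minimum length over nonzero $\beta \in I$ equals the minimum length over generators $u\alpha$ of $I$. The inequality ``shortest vector length $\leq$ shortest generator length'' is immediate. For the reverse direction I would first replace $\alpha$ by a shortest generator $\tilde\alpha \in I$ (whose existence follows from the orbit $\{u\alpha : u \in R^\times\}$ being a discrete subset of the canonical embedding space, via the log embedding of units together with the finiteness of the roots of unity), and then show that every $\beta = \gamma \tilde\alpha$ with $\gamma \in R$ a nonzero non-unit satisfies $\|\beta\| \geq \|\tilde\alpha\|$.

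The core tool is the arithmetic--geometric mean inequality applied to the squared absolute values of the embeddings: for any nonzero $\xi$ in a number field $K$ of degree $n$,
\[
\|\xi\|^2 \;=\; \sum_{\sigma} |\sigma(\xi)|^2 \;\geq\; n \,|N_{K/\Q}(\xi)|^{2/n}.
\]
In each of the four rings the smallest norm of a non-unit is $2$, attained by the unique prime above $2$ (which is totally ramified in every case). Applied to $\beta = \gamma \tilde\alpha$ this yields $\|\beta\|^2 \geq n \cdot 2^{2/n}\, |N(\tilde\alpha)|^{2/n}$. The proof is completed once I establish the matching upper bound $\|\tilde\alpha\|^2 \leq n \cdot 2^{2/n}\, |N(\tilde\alpha)|^{2/n}$.

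To prove this upper bound I would examine the log embedding of $\tilde\alpha$. For $\Z[i]$ the unit group is finite, $\|\tilde\alpha\|^2 = |N(\tilde\alpha)|$, and the bound holds with a margin of $4$. For $\Z[\sqrt{2}]$, $\Z[\zeta_8]$, and $\Z[\zeta_{16}+\zeta_{16}^7]$ the free part of the unit group has rank $1$ with fundamental unit $1+\sqrt{2}$, and the shortest-generator condition forces the ratio $|\sigma_1(\tilde\alpha)|/|\sigma_2(\tilde\alpha)|$ to lie in $[\sqrt{2}-1,\sqrt{2}+1]$, where $\sigma_1,\sigma_2$ pick representatives of the two real or complex places. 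Optimising $\|\tilde\alpha\|^2/|N(\tilde\alpha)|^{2/n}$ over this interval yields the constant $2\sqrt{2}$ for $\Z[\sqrt{2}]$ (against the target $n\cdot 2^{2/n} = 4$) and $4\sqrt{2}$ for both $\Z[\zeta_8]$ and $\Z[\zeta_{16}+\zeta_{16}^7]$ (matching the target $4\sqrt{2}$), so $\|\beta\|^2 \geq \|\tilde\alpha\|^2$ in every case.

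The main obstacle is the two quartic CM cases, where the AM--GM lower bound for $\|\beta\|^2$ and the shortest-generator upper bound for $\|\tilde\alpha\|^2$ coincide and the argument only yields a non-strict inequality. The theorem is compatible with equality---for instance, $1 - i = (1+\zeta_8)(1-\zeta_8)$ lies in $(1+\zeta_8)$ and satisfies $\|1-i\|^2 = 8 = \|1+\zeta_8\|^2$ without being a generator---so the weak inequality suffices. Moreover, since $\Z[\zeta_8]$ and $\Z[\zeta_{16}+\zeta_{16}^7]$ share the real subfield $\Q(\sqrt{2})$ and the same fundamental unit $1+\sqrt{2}$, the two CM cases reduce to essentially the same log-lattice optimisation, avoiding a separate analysis.
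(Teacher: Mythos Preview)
Your approach is essentially the same as the paper's: for each ring you bound every non-generator from below via AM--GM (using that the minimal non-unit norm is $2$) and bound the shortest generator from above by optimising over the rank-one unit group generated by $1+\sqrt{2}$, obtaining the matching constant $4\sqrt{2}\sqrt{N(I)}$ in the quartic CM cases and a strict gap in $\Z[\sqrt{2}]$ and $\Z[i]$. The paper packages the unit optimisation as a separate lemma on $\min_{n\in\Z}(at^n+bt^{-n})$ rather than a ratio constraint on the log embedding, but the content is identical; your observation that the inequality is non-strict in $\Z[\zeta_8]$ and $\Z[\zeta_{16}+\zeta_{16}^7]$, together with the example $(1+\zeta_8)\ni 1-i$, is a nice explicit acknowledgement of a point the paper leaves implicit. (Minor slip: under the canonical embedding $\|\tilde\alpha\|^2 = 2|N(\tilde\alpha)|$ in $\Z[i]$, not $|N(\tilde\alpha)|$, so the margin is $2$ rather than $4$.)
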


    \item    
     Based on the technique of finding the shortest generator, we provide a new method (different from analyzing lattice bases) to analyze the length of the shortest vector for prime ideals lying over $p$ in $\Z[\zeta_{2^{n+1}}]$ with $p\equiv3,5\pmod{8}$. 
    

    \item We explicitly determine  the length of the shortest vector for prime ideals  lying over $p$ with $p\equiv7,9\pmod{16}$ as described in Theorem~\ref{main-theorem}.


    \begin{theorem}\label{main-theorem}
    Suppose~$p\equiv7,9\pmod{16}$~and $\mathfrak{p}$ is a prime ideal of $\Z[\zeta_{2^{n+1}}]$ lying over $p$. Let $(a_{p}, b_{p})$ denote the solution of $a^2-2b^2=p$ in positive integers with minimal $a_{p}$. The length of the shortest vector in $\mathfrak{p}$ is $\sqrt{2^{n}a_{p}}$ under canonical embedding. Further, the length admits the upper bound $\sqrt[4]{2^{2n+1}p}$, which is tighter than the upper bound $2^n\sqrt[4]{p}$ obtained from Minkowski's Theorem.
    \end{theorem}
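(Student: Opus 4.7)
The plan is to combine two tools: Pan et al.'s decomposition-field reduction, which collapses SVP in $\mathfrak p$ to SVP in the prime $\mathfrak q:=\mathfrak p\cap\O_L$ of the decomposition subfield $L$, and Theorem~\ref{theorem:shortgenerator}, which reduces $\lambda_1(\mathfrak q)$ to finding the shortest generator. A convexity argument on totally-positive units in $\Z[\sqrt 2]$ then pins down the minimum explicitly.

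First I would identify $L$. Using the decomposition $(\Z/2^{n+1}\Z)^\times=\langle-1\rangle\times\langle 5\rangle$, one checks that the Frobenius subgroup $\langle p\rangle$ equals $\langle 5^2\rangle$ when $p\equiv 9\pmod{16}$ and $\langle-5^2\rangle$ when $p\equiv 7\pmod{16}$, so $L=\Q(\zeta_8)$ in the former case and $L=\Q(\zeta_{16}+\zeta_{16}^7)$ in the latter (assuming $n\ge 3$ so that this field embeds in $K=\Q(\zeta_{2^{n+1}})$). In either case $[K:L]=2^{n-2}$ and $\mathfrak q$ is a degree-one prime with $\mathfrak q\O_K=\mathfrak p$, so Pan et al.'s reduction gives $\lambda_1(\mathfrak p)=\sqrt{2^{n-2}}\,\lambda_1(\mathfrak q)$; hence it suffices to prove $\lambda_1(\mathfrak q)=2\sqrt{a_p}$.

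Both rings $\O_L$ fall under Theorem~\ref{theorem:shortgenerator} and have class number one (for $\Q(\zeta_{16}+\zeta_{16}^7)$ this is a direct Minkowski check in which the only nontrivial prime below the bound is the totally ramified prime above $2$, generated by $\zeta_{16}+\zeta_{16}^7$). So $\mathfrak q=(\alpha)$ is principal and $\lambda_1(\mathfrak q)$ equals the minimum canonical-embedding length of a generator. Set $L^+=\Q(\sqrt 2)$, $\pi_0=a_p+b_p\sqrt 2$, and $\varepsilon=(1+\sqrt 2)^2=3+2\sqrt 2$. Since $p$ splits completely in $L$, the ideal $\pi_0\O_L$ factors as $\mathfrak q\bar{\mathfrak q}$. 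Using that $\O_L^\times=\langle\mu_L,\,1+\sqrt 2\rangle$ (Hasse unit index $Q=1$), the relative norm sends $\O_L^\times$ onto $\langle\varepsilon\rangle$, so as $\alpha$ varies over generators of $\mathfrak q$ the element $\alpha\bar\alpha$ ranges exactly over $\pi_0\langle\varepsilon\rangle$. Via the CM-field formula $\|\alpha\|_{\mathrm{emb}}^2=2\,\mathrm{Tr}_{L^+/\Q}(\alpha\bar\alpha)$, the task reduces to minimizing $f(k):=\pi_0\varepsilon^k+\bar\pi_0\varepsilon^{-k}$ over $k\in\Z$.

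A direct expansion gives $f(0)=2a_p$, $f(1)-f(0)=4a_p+8b_p>0$, and $f(-1)-f(0)=4(a_p-2b_p)$. The minimality of $a_p$ forces $b_p<a_p/2$, equivalently $a_p^2<2p$, because otherwise $\pi_0\varepsilon^{-1}$ would have a smaller positive first coordinate. Combined with the strict convexity of $f$ on $\R$, the integer minimum occurs at $k=0$, and achievability is ensured by adjusting any generator $\alpha_0$ with $\alpha_0\bar\alpha_0=\pi_0\varepsilon^{-k}$ via $(1+\sqrt 2)^k\alpha_0\in\O_L$. Hence $\lambda_1(\mathfrak q)^2=4a_p$ and $\lambda_1(\mathfrak p)=\sqrt{2^n a_p}$, and the bound $a_p^2<2p$ yields $\lambda_1(\mathfrak p)<\sqrt[4]{2^{2n+1}p}<2^n\sqrt[4]p$ (the last inequality reducing to $2n+1<4n$, i.e., $n\ge 1$). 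The main obstacle is the unit-theoretic input for $\Q(\zeta_{16}+\zeta_{16}^7)$, namely verifying class number one and Hasse unit index $Q=1$ so that the parametrization $\alpha\bar\alpha\in\pi_0\langle\varepsilon\rangle$ is tight; once these finite computations are dispensed with, the convexity minimization is routine.
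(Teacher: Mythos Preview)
For $p\equiv 9\pmod{16}$ your argument coincides with the paper's: the decomposition field is $\Q(\zeta_8)$, which sits in the cyclotomic tower, so Theorem~\ref{Theorem:expandPan} applies directly; then Theorem~\ref{theorem:shortgenerator} reduces to shortest generators, and your parametrization $\alpha\bar\alpha\in\pi_0\langle\varepsilon\rangle$ is exactly the paper's computation $x_n+y_n\sqrt2=(1+\sqrt2)^{2n}(a_p+b_p\sqrt2)$ with $\|\alpha\|^2=4x_n\ge 4a_p$.

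For $p\equiv 7\pmod{16}$, however, there is a genuine gap. You correctly identify the decomposition field as $L=\Q(\zeta_{16}+\zeta_{16}^7)$, but this field is \emph{not} of the form $\Q(\zeta_{2^{k+1}})$, so Theorem~\ref{Theorem:expandPan}, the only form of Pan et al.'s reduction available here, does not cover the step from $\O_L$ up to $\Z[\zeta_{16}]$. In general the decomposition-field argument only gives the easy inequality $\lambda_1(\mathfrak p)\le\sqrt{[K:L]}\,\lambda_1(\mathfrak q)$ (any short vector in $\mathfrak q$ embeds in $\mathfrak p$ with the expected scaling); the matching lower bound, i.e.\ that no element of $\mathfrak q\,\Z[\zeta_{16}]$ is shorter than the shortest element of $\mathfrak q$, is precisely what the paper must prove separately as Lemma~\ref{lemma:1616216}. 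That lemma is not a formality: its proof writes a general $\gamma=\alpha'\beta\in\mathfrak q\,\Z[\zeta_{16}]$, computes $\tfrac12\|\Sigma(\gamma)\|^2=4a_p x+8b_p y$ for explicit integers $x,y$ depending on $\beta$, and then needs the arithmetic input $a_p\ge 2b_p$ from Lemma~\ref{lemma:relation of p and -p} together with a four-way case split to force $4a_p x+8b_p y\ge 4a_p$. Your convexity minimization of $f(k)=\pi_0\varepsilon^k+\bar\pi_0\varepsilon^{-k}$ is correct and matches the paper's generator computation, but without a substitute for Lemma~\ref{lemma:1616216} the lower bound $\lambda_1(\mathfrak p)\ge\sqrt{2^n a_p}$ in the $p\equiv 7\pmod{16}$ case remains unproved.
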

   
   \end{enumerate}

\section{Preliminaries}
\subsection{Pell equations}
    
    For a positive integer $d$ that is not a square, a generalized Pell equation is an equation in the integer variables $a$ and $b$ of the form
    \[
    a^2 - d b^2 = n,
    \]
    where $n\in\Z-\{0\}$.  
    
    In this paper, the equations $a^2-2b^2=p$ and $a^2-2b^2=-p$ with a prime $p$ are considered. The notation $(a_p,b_p)$ refers to the unique positive integer solution of $a^2 - 2b^2 = p$
    for which $a_p$ is minimal among all positive solutions. Similarly, $(a_{-p}, b_{-p})$ denotes the solution of $a^2-2b^2=-p$ in positive integers with minimal $a_{-p}$. According to \cite[Theorem 3.3]{Conrad2015PELLSEI}, 
    $$a_{p}<\sqrt{2p}.$$

   \subsection{Number fields}
    An algebraic number field $\K$ is a finite extension of the rational number field $\Q$, and such an extension can be constructed by adjoining an algebraic number $\alpha\in\C$ as $\K=\Q(\alpha)$.
    The ring of integers of $\K$ is denoted by $\O_\K$.
This ring of integers  is Dedekind and ideals in $\O_\K$ can be factorized uniquely. Let $\fa$ be an ideal of $\O_\K$, then
we have
\[
\fa=\fp_1^{e_1}\cdots \fp_g^{e_g},
\]
where $\fp_i$ is a prime ideal, $e_i\in \Z$ are non-negative and the factorization is unique up to ordering. The unit group of $\O_\K$, denoted by $\O_\K^\times$, consists of all invertible elements in $\O_\K$. 

\noindent\textbf{Norm of elements and ideals.}
Let $\K$ be a number field of degree $[\K:\Q]=t$, and let
$\varphi_1,\dots,\varphi_t$ be all embeddings of $\K$ into $\C$.
For an element $\alpha\in \K$, the norm of $\alpha$ over $\Q$
is defined as
\[
N_{\K/\Q}(\alpha)
=\prod_{i=1}^t \varphi_i(\alpha).
\]
When the underlying number field is clear from the context,
we will simply write $N(\alpha)$ instead of $N_{\K/\Q}(\alpha)$. Let $\fa$ be a non-zero ideal of $\O_\K$. The norm of $\fa$ is defined as
\[
N(\fa)=|\O_\K/\fa|.
\]
If $\fa$ is a principal ideal generated by $\alpha\in\O_\K$, then $N(\fa)=|N(\alpha)|$.

\noindent\textbf{Cyclotomic fields.} 
A cyclotomic field is a field obtained by adjoining a primitive root of unity to the rational number field.  Let \( m \in \mathbb{Z}^+ \), we define the cyclotomic field as \( \mathbb{Q}(\zeta_m) \), where \( \zeta_m \) is a primitive \( m \)-th root of unity, which can be expressed as:
\[
\zeta_m = e^{2\pi i / m}.
\]
The ring of integers of \( \mathbb{Q}(\zeta_m) \) is known to be \( \mathbb{Z}[\zeta_m] \). For the cyclotomic field \( \mathbb{Q}(\zeta_m) \), an automorphism $ \sigma_i$ with $ (i,m)=1$ is defined by:
\[
\begin{aligned}
\sigma_i:\mathbb{Q}(\zeta_m) &\longrightarrow \mathbb{Q}(\zeta_m) \\
      \zeta_m &\longmapsto \zeta_m^i. \\
\end{aligned}
\]
In this work, we focus on the power-of-two cyclotomic fields where $m=2^n$.

   \subsection{Lattices}
   Given a list of
linearly independent column vectors $\B=(\b_1,\ldots,\b_n)\in \R^{n\times n}$,
the (full rank) \textit{lattice} $\L(\B)$ is the set
\[
\L(\B)=\left\{\sum_{i=1}^{n}x_i\b_i\,|\,x_i\in \Z\right\}.
\]
The \textit{minimum distance} of the lattice is
\[
\lambda_1(\L):=\min_{0\neq v\in\L}||v||
\]
where $||\cdot||$ denotes the Euclidean norm.

Let $\L\in\R^n$ be a full rank lattice. There are several variants of \textit{Shortest Vector Problem}
(SVP) including:
\begin{enumerate}
    \item Search SVP: Given a basis of $\L$, find a non-zero vector $v\in\L$ such that
        \[
        \|v\|=\lambda_1(\L).
        \]
    \item Optimization SVP: Given a basis of $\L$, find $\lambda_1(\L)$.
\end{enumerate}
    According to \cite{regev2004lecture}, the two variants are essentially equivalent.
    
    \subsection{Ideal lattices}
    Let $I$ be a non-zero ideal of $\O_\K$. Then we can obtain ideal lattices by
    the following canonical embedding or coefficient embedding.
    
    \noindent\textbf{Embeddings.} Let $\K$ be a number field of degree $t=[\K:\Q]$ with $r_1$ real embeddings and $r_2$ conjugate pairs of complex embeddings ($t=r_1+2r_2$). The canonical embedding is the injective ring homomorphism:
    \[
    \begin{aligned}
    \Sigma_{\K}:\K &\longrightarrow\R^{r_1}\times\C^{2r_2} \\
    a &\longmapsto (\varphi_1(a),\dots,\varphi_{r_1}(a),\varphi_{r_1+1}(a),\dots,\varphi_{r_1+2r_2}(a)), \\
    \end{aligned}
    \]
    where $\{\varphi_{1},\dots,\varphi_{r_1}\}$ are the real embeddings of $\K$, and $\{\varphi_{r_1+1},\dots,\varphi_{r_1+r_2}\}$ are representatives of the complex embeddings. 
    Suppose $\alpha$ is one of the generators of $\K$, then coefficient embedding maps $\beta=a_0+a_1\alpha+\dots+a_{t-1}\alpha^{t-1}$ to its coefficient vector:
    \[
    \begin{aligned}
    C:\K &\longrightarrow\Q^t \\
     \beta &\longmapsto  (a_0,a_1,\dots,a_{t-1}).\\
     \end{aligned}
    \]
    Under the canonical embedding, the image
    \[
    \Sigma_{\K}(I)=\{\Sigma_{\K}(x)\mid x\in I\}
    \]
    maps into a subspace in $\C^t$ which is isomorphic to $\R^t$ as an inner product space.
   Under the coefficient embedding with respect to a fixed $\Q$-basis
    $\{1,\alpha,\dots,\alpha^{t-1}\}$ of $\K$, the image
    \[
    C(I)=\{C(x)\mid x\in\\I\}\subset \Z^{t}
    \]
    is a full-rank lattice. 
    
    \noindent\textbf{Relationship of two embeddings in power-of-two cyclotomic fields.} Let $\K=\Q(\zeta_{2^{n+1}})$ and $\alpha\in\K$, then the following proposition holds:
    $$||\Sigma_{\K}(\alpha)||=\sqrt{2^n}||C(\alpha)||.$$
    Therefore, an element is shortest in the ideal lattice of $\K$ under the canonical embedding if and only if it is shortest under the coefficient embedding.

    \noindent\textbf{Note:} In this article, unless otherwise specified, when referring to the length of an element or the shortest element, it is with respect to the canonical embedding.
    \subsection{Ideal SVP in cyclotomic fields}
  Pan et al. provided an algorithm to solve general ideal SVP in cyclotomic fields \cite[Algorithm 2]{PXWC21}. Here we restate the result of Pan et al.'s algorithm in the form of a theorem.
    
    Let $\O_{\mathbb{L}}=\Z[\zeta_{2^{n+1}}]$, $\O_{\mathbb{K}}=\Z[\zeta_{2^{k+1}}]$ with $n>k$. Let $I$ be an ideal (not necessarily prime) in $\O_{\K}$. We have 
    $$I\O_{\mathbb{L}}=I\oplus I\zeta_{2^{n+1}}\oplus I\zeta_{2^{n+1}}^2\oplus\cdots\oplus I\zeta_{2^{n+1}}^{n-k-1}.$$
    Therefore, \cite[Algorithm 2]{PXWC21} actually induces the following Theorem~\ref{Theorem:expandPan}.
    \begin{theorem}\label{Theorem:expandPan}
        For an ideal $I$ (not necessarily prime) in $\Z[\zeta_{2^{k+1}}]$, suppose $\alpha$ is the shortest vector in $I$. Then $\alpha$ is also the shortest vector in $I\Z[\zeta_{2^{n+1}}]$ with $n>k$.
    \end{theorem}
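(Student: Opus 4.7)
The plan is to reduce the claim to an orthogonality computation using the given direct-sum decomposition of $I\O_\L$. Fix an arbitrary nonzero $\beta\in I\O_\L$ and write it uniquely as $\beta=\sum_{j=0}^{2^{n-k}-1}\beta_{j}\zeta_{2^{n+1}}^{j}$ with $\beta_{j}\in I$. The goal is to prove the identity
\[
\|\Sigma_{\L}(\beta)\|^{2}=2^{n-k}\sum_{j}\|\Sigma_{\K}(\beta_{j})\|^{2},
\]
from which the theorem falls out immediately: if $\alpha$ realizes the minimum of $\|\Sigma_{\K}(\cdot)\|$ on $I$, then every nonzero $\beta\in I\O_{\L}$ satisfies $\|\Sigma_{\L}(\beta)\|^{2}\ge 2^{n-k}\|\Sigma_{\K}(\alpha)\|^{2}$, while the element $\alpha$ (viewed as an element of $I\O_\L$ with $\beta_{0}=\alpha$ and $\beta_{j}=0$ for $j\ge 1$) achieves equality.

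To establish the identity, I would group the embeddings of $\L$ into $\C$ by their restriction to $\K$. Every embedding $\tau\colon\K\hookrightarrow\C$ has exactly $2^{n-k}$ extensions $\sigma\colon\L\hookrightarrow\C$, and these extensions are parametrized by the choice of a $2^{n-k}$-th root of $\tau(\zeta_{2^{k+1}})$, which is to say $\sigma(\zeta_{2^{n+1}})=\zeta_{0}\eta$ where $\zeta_{0}$ is any fixed such root and $\eta$ ranges over the $2^{n-k}$-th roots of unity. For a fixed $\tau$, substituting $d_{j}:=\tau(\beta_{j})\zeta_{0}^{j}$ and expanding gives
\[
\sum_{\sigma\mid\tau}|\sigma(\beta)|^{2}=\sum_{\eta}\Bigl|\sum_{j}d_{j}\eta^{j}\Bigr|^{2}=\sum_{j,l}d_{j}\overline{d_{l}}\sum_{\eta}\eta^{j-l}.
\]
The inner sum is $2^{n-k}$ when $j=l$ and vanishes otherwise, since $0\le j,l<2^{n-k}$. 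Hence the double sum collapses to $2^{n-k}\sum_{j}|\tau(\beta_{j})|^{2}$, and summing over $\tau$ yields the desired identity.

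The only real obstacle is keeping the indexing of extensions straight and justifying that $|\zeta_{0}|=1$ so that the factors $|\zeta_{0}^{j}|$ disappear, which holds because $\zeta_{2^{k+1}}$ lies on the unit circle under every embedding. Everything else is routine. As a side remark, the same computation shows that the minimum of $\|\Sigma_{\L}(\cdot)\|$ on $I\O_{\L}$ is achieved exactly by the scalar multiples of shortest vectors of $I$ by monomials $\zeta_{2^{n+1}}^{j}$, but this refinement is not needed for the statement as given.
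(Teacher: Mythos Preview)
Your argument is correct. The orthogonality identity
\[
\|\Sigma_{\L}(\beta)\|^{2}=2^{n-k}\sum_{j}\|\Sigma_{\K}(\beta_{j})\|^{2}
\]
is exactly what is needed, and your character-sum computation establishing it is clean and complete.

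The paper does not really give its own proof here: it simply records the direct-sum decomposition $I\O_{\L}=\bigoplus_{j}I\zeta_{2^{n+1}}^{j}$ and then cites \cite[Algorithm~2]{PXWC21} as inducing the theorem. The implicit mechanism behind that citation is the same orthogonality, but seen through the \emph{coefficient} embedding rather than the canonical one: in the monomial basis $\{\zeta_{2^{n+1}}^{m}\}_{0\le m<2^{n}}$ of $\O_{\L}$, the summands $I\zeta_{2^{n+1}}^{j}$ occupy disjoint coordinate blocks, so $\|C_{\L}(\beta)\|^{2}=\sum_{j}\|C_{\K}(\beta_{j})\|^{2}$ is immediate, and the paper's scaling relation $\|\Sigma_{\K}(\cdot)\|=\sqrt{2^{k}}\,\|C(\cdot)\|$ converts this to your identity. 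Your route works directly in the canonical embedding via the orthogonality of characters $\eta\mapsto\eta^{j}$ on the group of $2^{n-k}$-th roots of unity; this is slightly more work but has the virtue of being self-contained and of not depending on the special coefficient--canonical relation peculiar to power-of-two cyclotomics. Either way the substance is the same direct-sum orthogonality; your write-up simply supplies the details the paper leaves to the reference.
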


    \section{Shortest generator versus shortest vector in related fields} $\label{Section:generator}$

    
    In this section, we prove that for any ideal in $\Z[i]$, $\Z[\zeta_8]$, $\Z[\sqrt{2}]$ and $\Z[\zeta_{16}+\zeta_{16}^7]$, the answer to the corresponding SVSG problem is positive, i.e., there exists a generator that is the shortest vector.
    
    As an application of this property, we present a new method to recompute the lengths of the shortest vectors of prime ideals with $p\equiv3,5\pmod{8}$ as given by Pan et al. In addition, we provide an analysis of the lengths of shortest vectors of prime ideals in $\Z[\sqrt{2}]$ with $p\equiv1,7\pmod{8}$, which leads to an efficient algorithm for computing $a_p$ .

\subsection{ Analysis of ideals in $\Z[i]$ } \label{subsection: Analysis of prime ideal SVP in Z[i]}

     For any ideal in $\Z[i]$, there always exists a generator that is the shortest vector. More specifically, any generator of an ideal in $\Z[i]$ is exactly the shortest vector. 
    \begin{theorem}
        Suppose $I$ is an ideal in $\Z[i]$ and $I=(\alpha)$, then $\alpha$ is the shortest vector in $I$.
    \end{theorem}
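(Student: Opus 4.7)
The plan is to exploit the multiplicativity of the complex modulus together with the fact that every nonzero Gaussian integer has modulus at least $1$. First I would translate the canonical embedding into a concrete norm on $\Z[i]$: since $\K=\Q(i)$ has no real embeddings and a single pair of complex embeddings $i\mapsto \pm i$, for $\alpha=a+bi\in\Z[i]$ one has $\|\Sigma_{\K}(\alpha)\|^2 = 2(a^2+b^2)$, so the length of $\alpha$ in the ideal lattice is $\sqrt{2}\,|\alpha|$, where $|\cdot|$ denotes the usual complex modulus. Comparing lengths in $I$ therefore reduces to comparing complex moduli.

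Next, I would write an arbitrary element $\gamma$ of $I=(\alpha)$ as $\gamma=\alpha\beta$ with $\beta\in\Z[i]$, and invoke multiplicativity of $|\cdot|$ to get $|\gamma|=|\alpha|\,|\beta|$. For any nonzero $\beta=c+di\in\Z[i]$, $|\beta|^2=c^2+d^2$ is a positive integer, hence $|\beta|\ge 1$. This immediately yields $|\gamma|\ge |\alpha|$ for all nonzero $\gamma\in I$, so $\alpha$ attains the minimum length. The equality case forces $\beta\in\{\pm 1,\pm i\}$, i.e., $\beta$ is a unit of $\Z[i]$, which moreover shows that every generator of $I$ is a shortest vector, matching the stronger assertion following the theorem statement.

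Since the entire argument is a one-line consequence of $\Z[i]$ being norm-Euclidean with minimum nonzero norm $1$, there is no serious obstacle; the only point requiring care is the scaling factor between the canonical-embedding norm and the complex modulus, and this is automatic for $\Q(i)$. No appeal to the log-unit lattice, class group, or Theorem~\ref{Theorem:expandPan} is required here; this is precisely the base case that will be leveraged in the subsequent reductions for $p\equiv 3,5\pmod 8$.
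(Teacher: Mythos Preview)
Your argument is correct and is essentially the same as the paper's: both write $\gamma=\alpha\beta$, use $\|\Sigma_{\Q(i)}(\gamma)\|^2=2|\gamma|^2=2|\alpha|^2|\beta|^2$, and conclude from $|\beta|^2\ge 1$ for nonzero $\beta\in\Z[i]$. Your additional remark on the equality case (that $\beta$ must be a unit) is a harmless elaboration not present in the paper but consistent with it.
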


    \begin{proof}
        For any non-zero element $\gamma$ in $I$, there exists $0\neq\beta\in\Z[i]$ such that $\gamma=\alpha\beta$. Then 
        $$||\Sigma_{\Q(i)}(\gamma)||^2=2\gamma\overline{\gamma}=2\alpha\overline{\alpha}\beta\overline{\beta}=||\Sigma_{\Q(i)}(\alpha)||^2\cdot|\beta|^2\geq||\Sigma_{\Q(i)}(\alpha)||^2.$$\
        So $\alpha$ is the shortest vector in $I$. 
        
    \end{proof}

     \noindent\textbf{Applications to analyze the length of the shortest vector for prime ideals lying over $p$ in $\Z[\zeta_{2^{n+1}}]$ with $p\equiv5\pmod{8}$.} For any prime ideal $\mathfrak{p}$ in $\Z[\zeta_{2^{n+1}}]$ with $p\equiv5\pmod{8}$,
    Pan et al. demonstrated that $$\lambda_1(C(\mathfrak{p}))=\sqrt{p}$$ by analyzing the lattice basis of $C(\mathfrak{p})$, thus $$\lambda_1(\Sigma_{\Q(i)}(\mathfrak{p}))=\sqrt{2p}.$$ 
    Here we get the same conclusion by analyzing the shortest generator.
     For any $p\equiv5\pmod{8}$, we have 
     \[
     p=a^2+b^2=(a+bi)(a-bi),
     \]
     and $(a+bi)$ is a prime ideal lying over $p$ in $\Z[i]$. We conclude that 
     $$\lambda_1\left(\Sigma_{\Q(i)}((a+bi))\right)=||\Sigma_{\Q(i)}(a+bi))||=\sqrt{2p}.$$
     Since $(a+bi)$ does not split in $\Z[\zeta_{2^{n+1}}]$, according to Theorem~\ref{Theorem:expandPan}, 
     $$\lambda_1(\Sigma_{\Q(\zeta_{2^{n+1}})}((a+bi)\Z[\zeta_{2^{n+1}}]))=||\Sigma_{\Q(\zeta_{2^{n+1}})}(a+bi))||=\sqrt{2^np}.$$
     

\subsection{Analysis of ideals in $\Z[\zeta_8]$}
    In Theorem~\ref{Theorem:Z[8] generator}, we show that for any ideal in $\Z[\zeta_8]$, there always exists a generator that is the shortest vector. 
    Let $\epsilon=1+\sqrt{2}$ and recall that $\Z[\zeta_8]^{\times}=\langle \epsilon,\zeta_8 \rangle$.
    To prove this theorem, we need  Lemma~\ref{lemma:leqEquation}.
    
    \begin{lemma}\label{lemma:leqEquation}
        Let $a,b,t\in \R_{>0}, n\in\Z$. Set $f(n)=at^n+bt^{-n}$, then $\min_{n\in\Z}(f(n))\leq(\sqrt{t}+\frac{1}{\sqrt{t}})(\sqrt{ab})$.
    \end{lemma}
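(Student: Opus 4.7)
The plan is to reduce the integer-minimization to a continuous one, round to the nearest integer, and control the rounding error multiplicatively.

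First, regard $f$ as a function of a real variable $x$, so $f(x) = at^{x} + bt^{-x}$. By the AM--GM inequality (or by setting the derivative to zero), its continuous minimum is $2\sqrt{ab}$, attained at $x^{\ast} = \frac{1}{2}\log_{t}(b/a)$; in particular $t^{x^{\ast}} = \sqrt{b/a}$.

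Next, I would choose $n_{0} \in \Z$ nearest to $x^{\ast}$ and set $\delta = n_{0} - x^{\ast}$, so that $|\delta| \leq \tfrac{1}{2}$. A direct substitution then yields $at^{n_{0}} = \sqrt{ab}\,t^{\delta}$ and $bt^{-n_{0}} = \sqrt{ab}\,t^{-\delta}$, hence
$$f(n_{0}) = \sqrt{ab}\bigl(t^{\delta} + t^{-\delta}\bigr).$$
To bound this, assume without loss of generality $t \geq 1$ (otherwise swap $(a,b)$ and replace $n$ by $-n$, which leaves the integer minimum unchanged). For $t \geq 1$ the function $h(y) = t^{y} + t^{-y}$ is even and non-decreasing on $[0,\infty)$, so $h(\delta) = h(|\delta|) \leq h(\tfrac{1}{2}) = \sqrt{t} + 1/\sqrt{t}$. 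Combining this with $\min_{n \in \Z} f(n) \leq f(n_{0})$ gives the claimed inequality.

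I do not foresee a genuine obstacle here: the argument is just a routine rounding of a one-variable convex-type optimization. The only mild subtlety is the degenerate case $t = 1$, where the inequality is false unless $a = b$; this is harmless because the lemma will only be applied with $t \neq 1$ (for instance when $t$ is a non-trivial power of the fundamental unit $\epsilon = 1 + \sqrt{2}$, as indicated by the preceding discussion).
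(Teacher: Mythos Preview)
Your argument is correct and follows essentially the same route as the paper: pass to the real-variable function, locate the continuous minimizer $x^\ast=\tfrac12\log_t(b/a)$, round to the nearest integer, and bound $t^\delta+t^{-\delta}$ by $\sqrt t+1/\sqrt t$. You are in fact a bit more careful than the paper in two places: you make explicit the reduction to $t\geq 1$ needed to justify the last inequality, and you flag the degenerate case $t=1$ (where the lemma as stated fails but which never arises in the applications).
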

    \begin{proof}
        Let $f(x)=at^x+bt^{-x}$. We have 
        $$f'(x)=\ln{t}(at^x-bt^{-x})=\ln{t}\frac{at^{2x}-b}{t^{2x}}.$$
        Therefore, $f(x)$ takes its minimum value at $x_0=\frac{1}{2}\log_{t}(\frac{b}{a})$. Let $n_0$ be the integer closest to $x_0$, so that $\lvert x_0 - n_0 \rvert \leq \frac{1}{2}$. Then
        \begin{align*}
        \min_{n\in\Z}(f(n))&\leq f(n_0)\\
        &=at^{n-x_0+x_0}+bt^{-(n-x_0+x_0)}\\
        &=\sqrt{ab}(t^{n-x_0}+t^{x_0-n})\\
        &\leq\sqrt{ab}(t^{\frac{1}{2}}+t^{-\frac{1}{2}}).
        \end{align*}
    \end{proof}

    \begin{theorem} \label{Theorem:Z[8] generator}
        Suppose $I$ is an ideal in $\Z[\zeta_8]$, then there exists a generator of $I$ which is the shortest vector in $I$.
    \end{theorem}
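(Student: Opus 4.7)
The plan is to fix an arbitrary generator $\alpha$ of $I$ and produce an integer $n_0$ such that the generator $\alpha\epsilon^{n_0}$ is at least as short as every non-generator of $I$; this forces the shortest vector to agree in length with some generator. Since $\Z[\zeta_8]^{\times}=\langle\zeta_8,\epsilon\rangle$ and every complex embedding sends $\zeta_8$ to a root of unity of absolute value one, multiplication by a power of $\zeta_8$ preserves $\|\Sigma_{\Q(\zeta_8)}(\cdot)\|$, so only the $\epsilon^n$ direction needs to be optimized.

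The four embeddings of $\Q(\zeta_8)$ split into two complex-conjugate pairs $\{\varphi_1,\varphi_7\}$ and $\{\varphi_3,\varphi_5\}$, giving
$$\|\Sigma_{\Q(\zeta_8)}(\beta)\|^2 = 2|\varphi_1(\beta)|^2 + 2|\varphi_3(\beta)|^2$$
for every $\beta\in\Q(\zeta_8)$. Using $\zeta_8+\zeta_8^{-1}=\sqrt{2}$ and $\zeta_8^3+\zeta_8^{-3}=-\sqrt{2}$, I would compute $|\varphi_1(\epsilon)|=1+\sqrt{2}$ and $|\varphi_3(\epsilon)|=(1+\sqrt{2})^{-1}$, so $\|\Sigma_{\Q(\zeta_8)}(\alpha\epsilon^n)\|^2 = at^n + bt^{-n}$ fits the template of Lemma~\ref{lemma:leqEquation} with $a = 2|\varphi_1(\alpha)|^2$, $b = 2|\varphi_3(\alpha)|^2$, and $t = (1+\sqrt{2})^2$. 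Since $\sqrt{t}+\frac{1}{\sqrt{t}} = 2\sqrt{2}$ and $\sqrt{ab} = 2\sqrt{|N(\alpha)|}$, the lemma delivers an integer $n_0$ with
$$\|\Sigma_{\Q(\zeta_8)}(\alpha\epsilon^{n_0})\|^2 \leq 4\sqrt{2}\sqrt{|N(\alpha)|}.$$

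To close the argument, I would write any $\beta\in I$ as $\beta = \alpha\gamma$ with $\gamma\in\Z[\zeta_8]$ and apply AM-GM to the two-term expression for $\|\Sigma_{\Q(\zeta_8)}(\beta)\|^2$:
$$\|\Sigma_{\Q(\zeta_8)}(\beta)\|^2 \geq 4|\varphi_1(\beta)||\varphi_3(\beta)| = 4\sqrt{|N(\alpha)||N(\gamma)|},$$
using that $\varphi_7=\overline{\varphi_1}$ and $\varphi_5=\overline{\varphi_3}$ give $|\varphi_1(\beta)||\varphi_3(\beta)|=\sqrt{|N(\beta)|}$. If $\gamma$ is not a unit then $|N(\gamma)|\geq 2$, whence $\|\Sigma_{\Q(\zeta_8)}(\beta)\|^2 \geq 4\sqrt{2}\sqrt{|N(\alpha)|} \geq \|\Sigma_{\Q(\zeta_8)}(\alpha\epsilon^{n_0})\|^2$. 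Hence $\alpha\epsilon^{n_0}$ is at least as short as every non-generator, and so some generator realizes $\lambda_1(\Sigma_{\Q(\zeta_8)}(I))$.

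The main obstacle is the alignment of constants: the upper bound $2\sqrt{2}\cdot\sqrt{ab}$ supplied by Lemma~\ref{lemma:leqEquation} must be dominated by the AM-GM lower bound forced by the smallest admissible non-unit norm. This works precisely because the fundamental unit $\epsilon$ of $\Z[\sqrt{2}]\subset\Z[\zeta_8]$ is small enough that $\sqrt{t}+\frac{1}{\sqrt{t}}=2\sqrt{2}$ matches $\sqrt{|N(\gamma)|}\geq\sqrt{2}$ exactly. This delicate matching is what fails in rings with larger regulators (for instance $\Z[\sqrt{14}]$ from the introductory example), and is the structural reason the SVSG property holds for $\Z[\zeta_8]$ but can fail elsewhere.
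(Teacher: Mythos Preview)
Your proposal is correct and follows essentially the same approach as the paper's proof: both arguments apply AM--GM to the two conjugate-pair terms in $\|\Sigma_{\Q(\zeta_8)}(\cdot)\|^2$ to get the lower bound $4\sqrt{2}\sqrt{N(I)}$ on any non-generator, and then invoke Lemma~\ref{lemma:leqEquation} with $t=\epsilon^2$ to produce a generator $\alpha\epsilon^{n_0}$ meeting this same bound. Your write-up is somewhat more explicit (computing $|\varphi_1(\epsilon)|$, $|\varphi_3(\epsilon)|$, and the constants $\sqrt t+1/\sqrt t=2\sqrt2$ directly) and adds the useful closing remark about why the constants align here but fail in rings like $\Z[\sqrt{14}]$, but the mathematical skeleton is identical.
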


    \begin{proof}
        Let $I=(\alpha)$. For any non-zero and non-unit element $\gamma$ in $I$, there exists a non-unit and non-zero $\beta\in\Z[\zeta_8]$ such that $\gamma=\alpha\beta$. We have
        $$||\Sigma_{\Q(\zeta_8)}(\gamma)||^2=2|\gamma|^2+2|\sigma_3({\gamma)}|^2\geq4\sqrt{\sigma_1(\gamma)\sigma_7(\gamma)\sigma_3(\gamma)\sigma_5(\gamma)}\geq4\sqrt{2}\sqrt{N(I)}.$$
        
        Now we prove that there exists a generator $\alpha'$ of $I$ such that $||\Sigma_{\Q(\zeta_8)}(\alpha')||^2\leq4\sqrt{2}\sqrt{N(I)}$.  According to Lemma~\ref{lemma:leqEquation}, there exists $n\in\Z$ such that
        \begin{align*}
            ||\Sigma_{\Q(\zeta_8)}(\alpha\epsilon^n)||^2&=2|\alpha|^2(\epsilon^2)^n+2|\sigma_3({\alpha})|^2(\epsilon^2)^{-n}\\
            &\leq 2\sqrt{N(I)}(\epsilon+\epsilon^{-1})\\
            &=4\sqrt{2}\sqrt{N(I)}. 
        \end{align*}
    \end{proof}

    \noindent\textbf{Applications to analyze the length of the shortest vector for prime ideals lying over $p$ in $\Z[\zeta_{2^{n+1}}]$ with $p\equiv3\pmod{8}$.} For any prime ideal $\mathfrak{p}$ in $\Z[\zeta_{2^{n+1}}]$ with $p\equiv3\pmod{8}$,
    Pan et al. demonstrated that $$\lambda_1(C(\mathfrak{p}))=\sqrt{p}$$ by analyzing the lattice basis of $C(\mathfrak{p})$, thus $$\lambda_1(\Sigma_{\Q(\zeta_8)}(\mathfrak{p}))=\sqrt{4p}.$$ Here we get the same conclusion by analyzing the shortest generator.
    For any $p\equiv3\pmod{8}$, we have
    \[
    p=(a+b\sqrt{-2})(a-b\sqrt{-2}) 
    \]
    and $(a+b\sqrt{-2})$ is a prime ideal lying over $p$ in $\Z[\zeta_8]$. Let $\alpha=a+b\sqrt{-2}$, then all generators in $(\alpha)$ can be written as $\alpha\zeta_8^k\epsilon^n$ where $k,n\in\Z$. Then 
    $$||\Sigma_{\Q(\zeta_8)}(\alpha\zeta_8^k\epsilon^n)||^2=2(a^2+2b^2)(\epsilon^{2n}+\epsilon^{-2n})\geq4(a^2+2b^2)=4p.$$
    Therefore
    $$\lambda_1\left(\Sigma_{\Q(\zeta_8)}((\alpha))\right)=||\Sigma_{\Q(\zeta_8)}(\alpha)||=\sqrt{4p}.$$
    Since $(a+b\sqrt{-2})$ is inert in $\mathbb{Z}[\zeta_{2^{n+1}}]$, according to Theorem~\ref{Theorem:expandPan},
    $$\lambda_1\left(\Sigma_{\Q(\zeta_{2^{n+1}})}((\alpha))\right)=||\Sigma_{\Q(\zeta_{2^{n+1}})}(\alpha)||=\sqrt{2^np}.$$
    
    
\subsection{Analysis of ideal in $\Z[\sqrt{2}]$}
    Like $\Z[i]$ and $\Z[\zeta_8]$, for an ideal in $\Z[\sqrt{2}]$, there exists a generator that is the shortest vector. 
    \begin{theorem}  \label{Theorem:Z[2] generator}
        Suppose $I$ is an ideal in $\Z[\sqrt{2}]$, then there exists a generator of $I$ which is the shortest vector in $I$.
    \end{theorem}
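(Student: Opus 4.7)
The plan is to mirror the argument used for $\Z[\zeta_8]$ in Theorem~\ref{Theorem:Z[8] generator}, exploiting that $\Z[\sqrt{2}]$ has unit group $\langle -1,\epsilon\rangle$ of rank one, with $\epsilon=1+\sqrt{2}$ and Galois conjugate $\sigma(\epsilon)=1-\sqrt{2}=-\epsilon^{-1}$. Since $\Q(\sqrt{2})$ is real quadratic with $r_1=2$, one has $||\Sigma_{\Q(\sqrt{2})}(\gamma)||^2=\gamma^2+\sigma(\gamma)^2$ for every $\gamma\in\Q(\sqrt{2})$. Writing $I=(\alpha)$, every generator of $I$ is then of the form $\pm\alpha\epsilon^n$ for some $n\in\Z$.

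First I would use Lemma~\ref{lemma:leqEquation} to produce a sufficiently short generator. A direct computation gives
$$||\Sigma_{\Q(\sqrt{2})}(\alpha\epsilon^n)||^2=\alpha^2\epsilon^{2n}+\sigma(\alpha)^2\epsilon^{-2n},$$
so applying the lemma with $a=\alpha^2$, $b=\sigma(\alpha)^2$ and $t=\epsilon^2$ yields an integer $n_0$ satisfying
$$||\Sigma_{\Q(\sqrt{2})}(\alpha\epsilon^{n_0})||^2 \leq (\epsilon+\epsilon^{-1})\,|\alpha\sigma(\alpha)| = 2\sqrt{2}\,N(I),$$
using $\epsilon+\epsilon^{-1}=2\sqrt{2}$ and $|\alpha\sigma(\alpha)|=|N(\alpha)|=N(I)$.

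Next I would bound non-generators from below: any such $\gamma\in I\setminus\{0\}$ can be written as $\gamma=\alpha\beta$ with $\beta$ a non-unit of $\Z[\sqrt{2}]$, so $|N(\beta)|\geq 2$ and $|N(\gamma)|\geq 2N(I)$. AM-GM then yields
$$||\Sigma_{\Q(\sqrt{2})}(\gamma)||^2 = \gamma^2+\sigma(\gamma)^2 \geq 2|\gamma\sigma(\gamma)| = 2|N(\gamma)| \geq 4N(I).$$
Since $2\sqrt{2}<4$, the generator $\alpha\epsilon^{n_0}$ is strictly shorter than every non-generator, so the shortest vector of $I$ must itself be a generator. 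The only potentially delicate step is ensuring the numerical gap $2\sqrt{2}<4$, which closes the argument cleanly because AM-GM is applied across only two real embeddings rather than losing an additional factor as in the complex $\Z[\zeta_8]$ case.
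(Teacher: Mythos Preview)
Your proposal is correct and follows essentially the same route as the paper: both arguments lower-bound non-generators via AM--GM by $4N(I)$ and use Lemma~\ref{lemma:leqEquation} with $t=\epsilon^2$ to exhibit a generator of squared length at most $(\epsilon+\epsilon^{-1})N(I)=2\sqrt{2}\,N(I)<4N(I)$. The only cosmetic differences are the order in which you present the two bounds and that you make the numerical value $2\sqrt{2}$ explicit rather than writing $\epsilon+\epsilon^{-1}$.
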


    \begin{proof}
        Let $I=(\alpha)$. For any non-zero and non-unit element $\gamma$ in $I$, there exists a non-unit and non-zero $\beta\in\Z[\sqrt{2}]$ such that $\gamma=\alpha\beta$. We have
        $$||\Sigma_{\Q(\sqrt{2})}(\gamma)||^2=\gamma^2+\tau_2({\gamma})^2\geq2|N(\gamma)|\geq4N(I),$$
        where $\tau_2(\sqrt{2})=-\sqrt{2}$.
        
        Now we prove that there exists a generator $\alpha'$ such that $||\Sigma_{\Q(\sqrt{2})}(\alpha')||^2<4N(I)$. Let $\epsilon=1+\sqrt{2}$. Recall $\Z[\sqrt{2}]^{\times}=\langle\epsilon,-1\rangle$. According to Lemma~\ref{lemma:leqEquation}, there exists $n\in\Z$ such that
        \[||\Sigma_{\Q(\sqrt{2})}(\alpha\epsilon^n)||^2=\alpha^2(\epsilon^2)^n+\tau_2(\alpha)^2(\epsilon^2)^{-n}\leq N(I)(\epsilon+\epsilon^{-1})<4N(I). \]
    \end{proof}

    \noindent\textbf{Applications to analyze the length of the shortest vector for prime ideals lying over $p$ in $\Z[\sqrt{2}]$ with $p\equiv1,7\pmod{8}$.} 
     \begin{lemma}\label{lemma:relation of p and -p}
        Suppose $p$ is a prime, then $a_{p}\geq2b_{p}$ and $a_{-p}=a_{p}-2b_{p}$.
    \end{lemma}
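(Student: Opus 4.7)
The plan is to treat the two assertions with different techniques. The first is an elementary consequence of the Preliminaries bound $a_{p}<\sqrt{2p}$; the second will require the arithmetic structure of $\Z[\sqrt{2}]$ and is where I expect the main obstacle to lie. Throughout I will assume $p$ is an odd prime for which the equations are solvable, i.e., $p\equiv 1,7\pmod{8}$, since otherwise the claim is vacuous.

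For $a_{p}\geq 2b_{p}$: squaring and using $a_{p}^{2}=p+2b_{p}^{2}$ reduces the inequality to $p\geq 2b_{p}^{2}$, which follows from $2b_{p}^{2}=a_{p}^{2}-p<2p-p=p$ by the Preliminaries. For odd $p$ this in fact gives the strict form $a_{p}>2b_{p}$, which I will need below.

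For $a_{-p}=a_{p}-2b_{p}$, I will prove both inequalities. The upper bound $a_{-p}\leq a_{p}-2b_{p}$ follows from exhibiting $(a_{p}-2b_{p},\,a_{p}-b_{p})$ as a positive-integer solution of $a^{2}-2b^{2}=-p$: direct expansion gives $(a_{p}-2b_{p})^{2}-2(a_{p}-b_{p})^{2}=-a_{p}^{2}+2b_{p}^{2}=-p$, and both coordinates are positive by the strict form of the first assertion. The algebraic origin of this solution is the identity $\bar{\pi}\epsilon=(a_{p}-2b_{p})+(a_{p}-b_{p})\sqrt{2}$, where $\pi=a_{p}+b_{p}\sqrt{2}$ has norm $p$ and $\epsilon=1+\sqrt{2}$ is the fundamental unit.

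For the reverse inequality $a_{-p}\geq a_{p}-2b_{p}$ (the main obstacle), I will exploit that $\Z[\sqrt{2}]$ is a PID with $\Z[\sqrt{2}]^{\times}=\langle -1,\epsilon\rangle$ and $N(\epsilon)=-1$, and that $p$ splits as $(\pi)(\bar{\pi})$. Every element of norm $-p$ is therefore of the form $\pm\pi\epsilon^{k}$ or $\pm\bar{\pi}\epsilon^{k}$ with $k\in\Z$ odd, and it suffices to identify the minimum first coordinate over all positive-coordinate representatives. Using $|\bar{\epsilon}|=\sqrt{2}-1<1$ together with the strict inequality $a_{p}>2b_{p}$, a sign analysis shows that positive-coordinate representatives occur only for $k\geq 1$, and within each orbit the first coordinate is strictly increasing in $k$ (because $\epsilon^{2}$ acts as $(A,B)\mapsto(3A+4B,2A+3B)$, strictly increasing each coordinate when both are positive). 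The minimum first coordinates in the $\pi$- and $\bar{\pi}$-orbits are $a_{p}+2b_{p}$ and $a_{p}-2b_{p}$, both attained at $k=1$; since $a_{p}>2b_{p}$, the overall minimum is $a_{p}-2b_{p}$. The delicate step here is the sign analysis for negative $k$: one must examine all four sign choices $\pm\pi\epsilon^{k}$ and $\pm\bar{\pi}\epsilon^{k}$ separately to confirm that no positive-coordinate representative is missed.
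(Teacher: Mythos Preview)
Your proposal is correct. For the first assertion you take a genuinely shorter route than the paper: you reduce $a_{p}\geq 2b_{p}$ to $p\geq 2b_{p}^{2}$ and then invoke the Preliminaries bound $a_{p}<\sqrt{2p}$, whereas the paper argues by descent, applying the transformation $(a,b)\mapsto(3a-4b,\lvert 3b-2a\rvert)$ (multiplication by $\epsilon^{-2}$) and showing that $a_{p}<2b_{p}$ would produce a strictly smaller positive solution. Your argument is cleaner, at the price of depending on the cited bound; the paper's is self-contained but longer.

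For the second assertion the two approaches coincide in substance: both parametrize the norm~$-p$ elements by the unit group and locate the minimal positive first coordinate. The paper works with the single family $\bar\pi\,\epsilon^{2m+1}$, $m\in\Z$, and checks the two ranges $m\geq 0$ and $m<0$ separately; you instead split into the $\pi$- and $\bar\pi$-orbits and restrict to $k\geq 1$ by monotonicity under $\epsilon^{2}$. Your flagged ``delicate step'' for negative $k$ is handled most cleanly by the conjugation symmetry the paper uses implicitly: since the rational part of $\pi\epsilon^{k}$ equals that of $-\bar\pi\epsilon^{-k}$ for odd $k$, the negative-$k$ values in one orbit duplicate the positive-$k$ values in the other, so no positive first coordinate smaller than $a_{p}-2b_{p}$ can appear. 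Noting this symmetry would let you collapse the four-sign case analysis you anticipate into the monotonicity argument you already have.
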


    \begin{proof}
            Note that $a_{p}>b_{p}\sqrt{2}$. Let 
    \[
    a'=|3a_{p}-4b_{p}|=3a_{p}-4b_{p},b'=|3b_{p}-2a_{p}|,
    \]
    such that $a'^2-2b'^2=p$. If $a_{p}<2b_{p}$, we have:
    \begin{itemize}
        \item If $3b_{p}-2a_{p}\geq0$, then $b'=3b_{p}-2a_{p}<(3-2\sqrt{2})b_{p}<b_{p}$.
        \item If $3b_{p}-2a_{p}<0$, then $b'=2a_{p}-3b_{p}<b_{p}$.
    \end{itemize}
    This is contradictory to that $b_{p}$ is the minimal positive root of the equation. So $a_{p}\geq 2b_{p}$.

        $N(a_p-b_p\sqrt{2})=(a+b\sqrt{2})(a-b\sqrt{2})=p$. $\Z[\sqrt{2}]^{\times}=\langle1+\sqrt{2},-1\rangle$. Let $\beta=(1+\sqrt{2})^{2m+1}(-1)^k$ with $m,k\in\Z$, then $N(\beta)=-1$ and $N((a_p-b_p\sqrt{2})\beta)=-p$. Therefore, all positive $a$ such that $a^2-2b^2=-p$ can be represented as
        \begin{align*}
            X_m &=\left|\frac{(a_p-b_p\sqrt{2})(1+\sqrt{2})^{2m+1}+(a_p+b_p\sqrt{2})(1-\sqrt{2})^{2m+1}}{2}\right|\\
                &=\frac{(a_p-b_p\sqrt{2})(1+\sqrt{2})^{2m+1}-(a_p+b_p\sqrt{2})(\sqrt{2}-1)^{2m+1}}{2},
        \end{align*}
        with $m\in\Z$.

        If $m\geq0$, $X_m\geq X_0=a_p-2b_p$.

        If $m<0$, we have
        \begin{align*}
            X_m&=\left|\frac{(a_p-b_p\sqrt{2})(\sqrt{2}-1)^{-2m-1}-(a_p+b_p\sqrt{2})(\sqrt{2}+1)^{-2m-1}}{2}\right|\\
               &=\frac{(a_p+b_p\sqrt{2})(\sqrt{2}+1)^{-2m-1}-(a_p-b_p\sqrt{2})(\sqrt{2}-1)^{-2m-1}}{2}.
        \end{align*}
        Then we have $X_m\geq X_{-1}=a_p+2b_p\geq a_p-2b_p$.
    \end{proof}
    \begin{theorem}
        Suppose $p\equiv 1,7\pmod{8}$ and $\mathfrak{p}$ is a prime ideal of $\Z[\sqrt{2}]$ lying over $p$, then 
        $$\lambda_1\left(\Sigma_{\Q(\sqrt{2})}(\mathfrak{p})\right)=\sqrt{2}\min\left(\sqrt{2a_{p}^2-p},\sqrt{6a_{p}^2-4\sqrt{2}a_{p}\sqrt{a_{p}^2-p}-3p}\right).$$
    \end{theorem}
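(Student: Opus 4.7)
The plan is to apply Theorem~\ref{Theorem:Z[2] generator} to reduce the shortest-vector problem in $\mathfrak{p}$ to a shortest-generator search, and then to minimize the squared length explicitly over the one-parameter family of generators. Since $p\equiv 1,7\pmod{8}$ the prime $p$ splits in the PID $\Z[\sqrt{2}]$; the equation $a_p^{2}-2b_p^{2}=p$ shows that one prime above $p$ is generated by $\alpha_0:=a_p+b_p\sqrt{2}$, and by Galois symmetry we may assume $\mathfrak{p}=(\alpha_0)$. Theorem~\ref{Theorem:Z[2] generator} guarantees that the shortest vector is some generator. Using $\Z[\sqrt{2}]^{\times}=\langle-1,\epsilon\rangle$ with $\epsilon=1+\sqrt{2}$ and $\tau_2(\epsilon)=-\epsilon^{-1}$, every generator's squared length equals
\[
f(m) \;:=\; \alpha_0^{2}\,\epsilon^{2m}+\tau_{2}(\alpha_0)^{2}\,\epsilon^{-2m},\qquad m\in\Z,
\]
so the task becomes computing $\min_{m\in\Z} f(m)$.

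I would next exploit that the real extension of $f$ is strictly convex, with unique real minimum at $m_0=\log_{\epsilon}\!\bigl(\tau_2(\alpha_0)/\alpha_0\bigr)$; hence the integer minimum is attained at $\lfloor m_0\rfloor$ or $\lceil m_0\rceil$. The main obstacle is the localization $m_0\in(-1,0)$, and this is precisely where the minimality of $a_p$ must enter. The upper bound $m_0<0$ is immediate from $0<\tau_2(\alpha_0)<\alpha_0$. For the lower bound $m_0>-1$, a direct expansion gives
\[
\epsilon^{2}\tau_2(\alpha_0)-\alpha_0 \;=\; 2(a_p-2b_p)\,\epsilon,
\]
which is strictly positive by Lemma~\ref{lemma:relation of p and -p} (together with the observation that $a_p=2b_p$ is impossible for odd prime $p$, as it would force $p=2b_p^{2}$). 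Consequently $\min_{m\in\Z} f(m)=\min\{f(0),f(-1)\}$.

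It remains to evaluate the two candidates and to massage them into the stated form. Using $2b_p^{2}=a_p^{2}-p$ one obtains $f(0)=\alpha_0^{2}+\tau_2(\alpha_0)^{2}=2(2a_p^{2}-p)$. Expanding $\alpha_0\epsilon^{-1}=-(a_p-2b_p)+(a_p-b_p)\sqrt{2}$ and summing the squares of its two real embeddings yields $f(-1)=2(a_p-2b_p)^{2}+4(a_p-b_p)^{2}=2(6a_p^{2}-8a_pb_p-3p)$. Substituting $b_p=\sqrt{(a_p^{2}-p)/2}$ converts $8a_pb_p$ into $4\sqrt{2}\,a_p\sqrt{a_p^{2}-p}$, and taking the square root of $\min\{f(0),f(-1)\}$ produces the formula displayed in the theorem. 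The arithmetic here is routine; the entire substance of the proof sits in the localization of $m_0$.
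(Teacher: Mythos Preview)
Your proof is correct and follows essentially the paper's approach: reduce via Theorem~\ref{Theorem:Z[2] generator} to a shortest-generator search over $\alpha_0\epsilon^{m}$ and minimize using Lemma~\ref{lemma:relation of p and -p}; the only difference is that the paper splits by the parity of the exponent and invokes the defining minimality of $a_p$ and $a_{-p}$ as Pell solutions rather than localizing by convexity. One small slip: your displayed $m_0$ should carry a factor $\tfrac12$, but since you actually verify $m_0>-1$ via the correct inequality $\epsilon^{2}\tau_2(\alpha_0)>\alpha_0$, the argument goes through unchanged.
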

    \begin{proof}
        
        Let $p=(a+b\sqrt{2})(a-b\sqrt{2})$, and $(a+b\sqrt{2})$ is a prime ideal in $\Z[\sqrt{2}]$. Without loss of generality, we suppose $\mathfrak{p}=(a+b\sqrt{2})$. According to Theorem~\ref{Theorem:Z[2] generator}, the shortest generator in $\mathfrak{p}$ is also the shortest vector in $\mathfrak{p}$.
    
        All generators in $\mathfrak{p}$ can be written as 
        \[
        \alpha_{n,k}=(a+b\sqrt{2})(1+\sqrt{2})^{n}(-1)^k=x_{n,k}+y_{n,k}\sqrt{2}, 
        \]
        where $n\in\Z$, $k\in\{0,1\}$. Then 
        $$N(\alpha_{n,k})=(-1)^np=x_{n,k}^2-2y_{n,k}^2,$$
        which implies that $(x_{n,k},y_{n,k})$ is a root for equation $a^2-2b^2=p$ or $a^2-2b^2=-p$. Compute 
        \begin{align*}
            ||\Sigma_{\Q(\sqrt{2})}(\alpha_{n,k})||^2&=2(x_{n,k}^2+2y_{n,k}^2)\\
            &=\begin{cases}
            2(2x_{n,k}^2-p), &\text{if $n$ is even}, \\
            2(2x_{n,k}^2+p), &\text{if $n$ is odd}.
            \end{cases}
        \end{align*}
        Therefore, smaller $|x_{n,k}|$ implies smaller $||\Sigma_{\Q(\sqrt{2})}(\alpha_{n,k})||$. We conclude that 
        $$\lambda_1\left(\Sigma_{\Q(\sqrt{2})}(\mathfrak{p})\right)=\sqrt{2}\min\left(\sqrt{2a_{p}^2-p},\sqrt{2a_{-p}^2+p}\right).$$
        Based on Lemma~\ref{lemma:relation of p and -p}, we have
        \[
            \sqrt{2a_{-p}^2+p}=\sqrt{6a_{p}^2-4\sqrt{2}a_{p}\sqrt{a_{p}^2-p}-3p}. 
        \]
    \end{proof}

    \noindent\textbf{The upper bound of $\lambda_1\left(\Sigma_{\Q(\sqrt{2})}(\mathfrak{p})\right)$.} Let $$f(a)=\min\left(\sqrt{2a^2-p},\sqrt{6a^2-4\sqrt{2}a\sqrt{a^2-p}-3p}\right),$$ then we have
    $$
    f(a)=\begin{cases}\sqrt{2a^2-p},& \text{if }a<\sqrt{\frac{\sqrt{2}+1}{2}p};\\\sqrt{6a^2-4\sqrt{2}a\sqrt{a^2-p}-3p},&\text{if }a\geq\sqrt{\frac{\sqrt{2}+1}{2}p}.\end{cases}\\
    $$
    Observe 
    \[
    \lambda_1\left(\Sigma_{\Q(\sqrt{2})}(\mathfrak{p})\right)=\sqrt{2}(f(a))\leq \sqrt{2\sqrt{2}p}.
    \]
    
    \noindent\textbf{An algorithm to find $a_{p}$}. Through the above discussion, we can see that for the shortest vector $u+v\sqrt{2}$ in $\mathfrak{p}$, $|u|$ is either $a_p$ or $a_{-p}$. Based on Lemma~\ref{lemma:relation of p and -p}, $a_p$ can be computed by $|u|$ and $|v|$. Therefore, we design Algorithm~\ref{Algorithm-ap} to find $a_{p}$.
   \begin{algorithm}[H]
      \caption{An algorithm to find $a_{p}$}\label{Algorithm-ap} 
      \begin{algorithmic}[1]
        \Require
          $p$: a prime $p$ satisfying $p\equiv\pm1\pmod{8}$.  
        \Ensure
           $a_{p}$: the minimal positive integer $a_{p}$ such that $a^2-2b^2=p$.
           \State Find a root of $x^2\equiv2\pmod{p}$, denoted by $a$.
           \State Construct $\B:=\begin{pmatrix}
            p & p \\
            a+\sqrt{2} & a-\sqrt{2}
            \end{pmatrix}$.
            \State Find the shortest vector in $\L(\B)$: $(u+v\sqrt{2},u-v\sqrt{2})=\text{LLL}(\B)$.
            \If {$u^2-2v^2=p$}  \State \Return $|u|$.
            \ElsIf {$u^2-2v^2=-p$}   \State \Return $2|v|-|u|$.
            \EndIf
      \end{algorithmic}
    \end{algorithm}
    If $p\equiv7\pmod{8}$, $a=2^{\frac{p+1}{4}}\mod{p}$. If $p\equiv1\pmod{8}$, there are several efficient algorithms to find $a$, such as Tonelli-Shanks~\cite{shanks1971class} and Cipolla~\cite{cipolla1903metodo}.
    

\subsection{Analysis of ideal in $\Z[\zeta_{16}+\zeta_{16}^7]$}\label{Sub:generator Z1616}

        \begin{theorem}\label{Theorem:Z[16167] generator}
        Suppose $I$ is an ideal in $\Z[\zeta_{16}+\zeta_{16}^7]$, then there exists a generator of $I$ which is the shortest vector in $I$.
    \end{theorem}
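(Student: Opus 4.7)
The plan is to follow the template used for $\Z[\zeta_8]$ and $\Z[\sqrt{2}]$: pair a lower bound on $||\Sigma_K(\gamma)||^2$ for non-generator $\gamma\in I$ with a matching upper bound on $||\Sigma_K(\alpha u)||^2$ for a well-chosen unit $u$, both of order $4\sqrt{2}\sqrt{N(I)}$. Write $K=\Q(\theta)$ with $\theta=\zeta_{16}+\zeta_{16}^7=2i\sin(\pi/8)$. Then $\theta^2=-2+\sqrt{2}$, whence $\sqrt{2}=\theta^2+2\in\Z[\theta]$, and $K$ is a quartic CM field with totally real subfield $K^+=\Q(\sqrt{2})$ and $\epsilon:=1+\sqrt{2}\in\Z[\theta]^\times$. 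Let $\tau_1,\tau_2$ be representatives of the two complex-conjugate pairs of embeddings of $K$, chosen so that $\tau_1$ restricts to the identity on $K^+$ and $\tau_2(\sqrt{2})=-\sqrt{2}$. Then $||\Sigma_K(\gamma)||^2=2|\tau_1(\gamma)|^2+2|\tau_2(\gamma)|^2$ and $|\tau_1(\gamma)||\tau_2(\gamma)|=\sqrt{|N_{K/\Q}(\gamma)|}$.

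For the lower bound, take $\gamma=\alpha\beta\in I$ with $\beta\in\Z[\theta]$ non-zero and non-unit. The AM--GM inequality yields $||\Sigma_K(\gamma)||^2\geq 4|\tau_1(\gamma)||\tau_2(\gamma)|=4\sqrt{|N(\gamma)|}\geq 4\sqrt{|N(\beta)|N(I)}$. A discriminant comparison (the minimal polynomial $x^4+4x^2+2$ of $\theta$ has discriminant $2^{11}$, which matches the conductor--discriminant value for $K$) gives $\Z[\theta]=\O_K$, so $|N(\beta)|\geq 2$, with equality attained by $\theta$ since $|N_{K/\Q}(\theta)|=N_{K^+/\Q}(2-\sqrt{2})=2$. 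Hence every non-generator element of $I$ satisfies $||\Sigma_K(\gamma)||^2\geq 4\sqrt{2}\sqrt{N(I)}$. For the upper bound, I would apply Lemma~\ref{lemma:leqEquation} to the generators $\alpha\epsilon^n$: since $\epsilon\in K^+$, setting $t=|\tau_1(\epsilon)|^2=(1+\sqrt{2})^2$ gives $|\tau_2(\epsilon)|^2=t^{-1}$, so
\[
||\Sigma_K(\alpha\epsilon^n)||^2=2|\tau_1(\alpha)|^2 t^n+2|\tau_2(\alpha)|^2 t^{-n},
\]
and the lemma supplies $n\in\Z$ with $||\Sigma_K(\alpha\epsilon^n)||^2\leq(\sqrt{t}+1/\sqrt{t})\cdot 2|\tau_1(\alpha)||\tau_2(\alpha)|=2\sqrt{2}\cdot 2\sqrt{N(I)}=4\sqrt{2}\sqrt{N(I)}$, using $\sqrt{t}+1/\sqrt{t}=(1+\sqrt{2})+(\sqrt{2}-1)=2\sqrt{2}$.

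Combining the two bounds, the generator $\alpha\epsilon^n$ is at least as short as any non-generator element of $I$, so the shortest vector in $I$ is achieved by a generator. The main subtlety will be making the two constants match: this reduces to verifying that the minimum norm of a non-zero non-unit in $\Z[\theta]$ is exactly $2$ (via $\Z[\theta]=\O_K$ together with the witness $\theta$) and the identity $\sqrt{t}+1/\sqrt{t}=2\sqrt{2}$ for $\epsilon=1+\sqrt{2}$. Importantly, the argument does not require identifying the actual fundamental unit of $\Z[\theta]$, since any additional units only sharpen the upper bound; the convenient choice $\epsilon=1+\sqrt{2}\in K^+$ already suffices.
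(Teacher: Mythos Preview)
Your proof is correct and follows essentially the same approach as the paper: both establish the lower bound $||\Sigma_K(\gamma)||^2 \geq 4\sqrt{2}\sqrt{N(I)}$ for non-generators via AM--GM together with $|N(\beta)|\geq 2$, and then use Lemma~\ref{lemma:leqEquation} with $\epsilon=1+\sqrt{2}$ to produce a generator meeting this bound. Your write-up is more explicit (verifying $\Z[\theta]=\O_K$, exhibiting $\theta$ as a norm-$2$ witness, and observing that only the single unit $\epsilon$ is needed rather than the full unit group), whereas the paper simply states $\Z[\zeta_{16}+\zeta_{16}^7]^\times=\langle 1+\sqrt{2},-1\rangle$ and refers back to the proof of Theorem~\ref{Theorem:Z[8] generator}.
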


    \begin{proof}
        Let $I=(\alpha)$. For any non-zero and non-unit element $\gamma$ in $I$, there exists a non-unit and non-zero $\beta\in\Z[\zeta_{16}+\zeta_{16}^7]$ such that $\gamma=\alpha\beta$. We have
        $$||\Sigma_{\Q(\zeta_{16}+\zeta_{16}^7)}(\gamma)||^2=2|\gamma|^2+2|\sigma_3({\gamma})|^2\geq4\sqrt{2}\sqrt{N(I)}.$$
        $\Z[\zeta_{16}+\zeta_{16}^7]^{\times}=\langle1+\sqrt{2},-1\rangle$. Then, as in the proof of Theorem~\ref{Theorem:Z[8] generator}, there exists a generator $\alpha'$ such that 
        \[
       ||\Sigma_{\Q(\zeta_{16}+\zeta_{16}^7)}(\alpha')||^2<4\sqrt{2}\sqrt{N(I)}. 
        \]
    \end{proof}

\section{SVP for prime ideals lying over $p\equiv9\pmod{16}$ in $\Z[\zeta_{2^{n+1}}]$}

In this section, we analyze the optimization SVP for prime ideals in $\Z[\zeta_{2^{n+1}}]$ when $p\equiv9\pmod{16}$. We also give an upper bound for the length which is tighter than the upper bound obtained from Minkowski's Theorem.

\subsection{High level idea of the analysis when $p\equiv9\pmod{16}$.} 

If $p\equiv9\pmod{16}$, prime ideals lying over $p$ in $\Z[\sqrt{2}]$ split into two prime ideals in $\Z[\zeta_8]$, and prime ideals in $\Z[\zeta_8]$ remain inert in $\Z[\zeta_{2^{n+1}}]$. The decomposition of prime ideals is illustrated in Figure~\ref{Fig-9-16}. 

\begin{figure}[h]
  \centering
    \begin{tikzpicture}
        \node (1) at (0,0) {$\Q$};
        \node (2) at (0,1) {$\Q(\sqrt{2})$};
        \node (3) at (0,2) {$\Q(\zeta_8)$};
        \node (4) at (0,3) {$\Q(\zeta_{2^{n+1}})$};
        \draw[-] (1) -- (2) -- (3);
        \draw[line width=1pt,
        dash pattern=on 0pt off 4pt,
        dash phase=3pt,
        line cap=round] (3) -- (4);

        \node (5) at (2.5,0) {$(p)$};
        \node (6) at (1.5,1) {$\mathfrak{p}_1$};
        \node (7) at (3.5,1) {$\mathfrak{p}_2$};
        \node (8) at (1,2) {$\mathfrak{p}_{11}$};
        \node (9) at (2,2) {$\mathfrak{p}_{12}$};
        \node (10) at (3,2) {$\mathfrak{p}_{21}$};
        \node (11) at (4,2) {$\mathfrak{p}_{22}$};
        \draw[-] (5)--(6)--(8);
        \draw[-] (5)--(7)--(11);
        \draw[-] (6)--(9);
        \draw[-] (7)--(10);

        \node (12) at (1,3) {$\tilde{\mathfrak{p}}_{11}$};
        \node (13) at (2,3) {$\tilde{\mathfrak{p}}_{12}$};
        \node (14) at (3,3) {$\tilde{\mathfrak{p}}_{21}$};
        \node (15) at (4,3) {$\tilde{\mathfrak{p}}_{22}$};
        \draw[line width=1pt,
        dash pattern=on 0pt off 4pt,
        dash phase=3pt,
        line cap=round] (8) -- (12);

        \draw[line width=1pt,
        dash pattern=on 0pt off 4pt,
        dash phase=3pt,
        line cap=round] (9) -- (13);

        \draw[line width=1pt,
        dash pattern=on 0pt off 4pt,
        dash phase=3pt,
        line cap=round] (10) -- (14);

        \draw[line width=1pt,
        dash pattern=on 0pt off 4pt,
        dash phase=3pt,
        line cap=round] (11) -- (15);
    \end{tikzpicture}
    \caption{Decomposition of prime ideals when $p\equiv9\pmod{16}$ }\label{Fig-9-16}
\end{figure}

\begin{ex}
    For rational prime 89, the decomposition is illustrated in Figure~\ref{Fig-89}.

    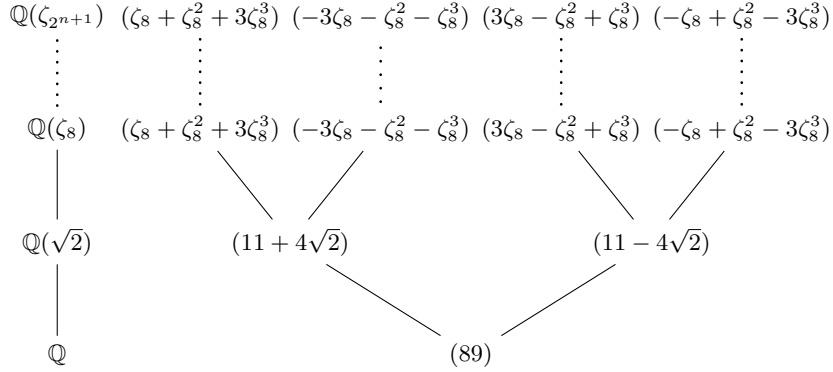
\begin{figure}[H]
  \centering
    \begin{tikzpicture}
        \node (1) at (0.5,0) {$\Q$};
        \node (2) at (0.5,1.5) {$\Q(\sqrt{2})$};
        \node (3) at (0.5,3) {$\Q(\zeta_8)$};
        \node (4) at (0.5,4.5) {$\Q(\zeta_{2^{n+1}})$};
        \draw[-] (1) -- (2) -- (3);
        \draw[line width=1pt,
        dash pattern=on 0pt off 4pt,
        dash phase=3pt,
        line cap=round] (3) -- (4);

        \node (5) at (6,0) {$(89)$};
        \node (6) at (3.6,1.5) {$(11+4\sqrt{2})$};
        \node (7) at (8.4,1.5) {$(11-4\sqrt{2})$};
        \node (8) at (2.4,3) {$(\zeta_8+\zeta_8^2+3\zeta_8^3)$};
        \node (9) at (4.8,3) {$(-3\zeta_8-\zeta_8^2-\zeta_8^3)$};
        \node (10) at (7.2,3) {$(3\zeta_8-\zeta_8^2+\zeta_8^3)$};
        \node (11) at (9.6,3) {$(-\zeta_8+\zeta_8^2-3\zeta_8^3)$};
        \draw[-] (5)--(6)--(8);
        \draw[-] (5)--(7)--(11);
        \draw[-] (6)--(9);
        \draw[-] (7)--(10);

        \node (12) at (2.4,4.5) {$(\zeta_8+\zeta_8^2+3\zeta_8^3)$};
        \node (13) at (4.8,4.5) {$(-3\zeta_8-\zeta_8^2-\zeta_8^3)$};
        \node (14) at (7.2,4.5) {$(3\zeta_8-\zeta_8^2+\zeta_8^3)$};
        \node (15) at (9.6,4.5) {$(-\zeta_8+\zeta_8^2-3\zeta_8^3)$};
        \draw[line width=1pt,
        dash pattern=on 0pt off 4pt,
        dash phase=3pt,
        line cap=round] (8) -- (12);

        \draw[line width=1pt,
        dash pattern=on 0pt off 5pt,
        dash phase=3pt,
        line cap=round] (9) -- (13);

        \draw[line width=1pt,
        dash pattern=on 0pt off 4pt,
        dash phase=3pt,
        line cap=round] (10) -- (14);

        \draw[line width=1pt,
        dash pattern=on 0pt off 4pt,
        dash phase=3pt,
        line cap=round] (11) -- (15);
    \end{tikzpicture}
    \caption{Decomposition of prime ideals for $p=89$}\label{Fig-89}
\end{figure}
    
\end{ex}
According to Theorem~\ref{Theorem:expandPan}, as long as we find the shortest vector $\alpha$ of a prime ideal $\mathfrak{p}$ lying over $p$ in $\Z[\zeta_8]$, $\alpha$ is also the shortest vector in the corresponding prime ideal in $\Z[\zeta_{2^{n+1}}]$. As discussed in Section~\ref{Section:generator}, to find $\alpha$, we only need to find the shortest generator in $\mathfrak{p}$. The generators of prime ideals in $\mathbb{Z}[\zeta_8]$ admit a factorization relation with the generators of the corresponding prime ideals in $\Z[\sqrt{2}]$. Thus the length of the generators in $\mathfrak{p}\cap\Z[\sqrt{2}]$ helps us understand the length of the generators in $\mathfrak{p}$.
\subsection{Detailed analysis}
The main result of the analysis is described in Theorem~\ref{theorem: p916}. Before that, we need the following Lemma~\ref{lem:u2k} and Lemma~\ref{p-1-8}.
\begin{lemma}\label{lem:u2k}
    If $u\in\Z[\sqrt{2}]^{\times}$ is totally positive, then $u=(1+\sqrt{2})^{2k},k\in\Z$.
\end{lemma}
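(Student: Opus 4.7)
The plan is to use the explicit description of the unit group and then impose the two positivity conditions from the two real embeddings.

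First, I would recall that $\Z[\sqrt{2}]^{\times} = \langle -1, \epsilon\rangle$ with $\epsilon = 1 + \sqrt{2}$, so any unit $u$ can be written uniquely as $u = (-1)^j \epsilon^n$ with $j \in \{0,1\}$ and $n \in \Z$. Since $[\Q(\sqrt{2}):\Q] = 2$ with real embeddings being the identity and $\tau$ sending $\sqrt{2} \mapsto -\sqrt{2}$, saying $u$ is totally positive means both $u > 0$ and $\tau(u) > 0$.

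Next I would apply these two inequalities in turn. From $\epsilon > 0$ we have $\epsilon^n > 0$ for every $n \in \Z$, so $u = (-1)^j \epsilon^n > 0$ forces $j = 0$, giving $u = \epsilon^n$. Then I would compute $\tau(u) = \tau(\epsilon)^n = (1 - \sqrt{2})^n$. Since $1 - \sqrt{2} < 0$, the power $(1-\sqrt{2})^n$ is positive if and only if $n$ is even, so writing $n = 2k$ with $k \in \Z$ yields $u = \epsilon^{2k} = (1+\sqrt{2})^{2k}$, as claimed.

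There is no real obstacle here; the argument is a direct case check on the sign pattern of the fundamental unit under the two embeddings. The only subtlety worth noting is that $N(\epsilon) = (1+\sqrt{2})(1-\sqrt{2}) = -1$, which is exactly why odd powers of $\epsilon$ fail to be totally positive (their norm is negative, so the two embeddings must have opposite signs), while even powers are automatically totally positive.
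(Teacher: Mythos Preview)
Your argument is correct and is exactly the approach the paper takes: the paper's proof simply cites $\Z[\sqrt{2}]^{\times}=\langle-1,1+\sqrt{2}\rangle$ and says any unit not of the form $(1+\sqrt{2})^{2k}$ fails to be totally positive, leaving the sign check implicit. You have merely spelled out that sign check explicitly under the two real embeddings.
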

\begin{proof}
    $\Z[\sqrt{2}]^{\times}=\langle-1,1+\sqrt{2}\rangle$. Then $u=(1+\sqrt{2})^{2k},k\in\Z$, otherwise it contradicts that $u$ is totally positive.
\end{proof}

\begin{lemma}\label{p-1-8}
    For a rational prime $p\equiv1\pmod{8}$, there exists $a,b\in\Z^+$ such that
    \[
     p=(a+b\sqrt{2})(a-b\sqrt{2}).
     \]
     Then there exists $\alpha\in\Z[\zeta_8]$ such that 
     \[
     a+b\sqrt{2}=\sigma_1(\alpha)\sigma_7(\alpha),a-b\sqrt{2}=\sigma_3(\alpha)\sigma_5(\alpha),
     \]
     and 
     \[
     ||\Sigma_{\Q(\zeta_8)}(\alpha)||^2=2\sigma_1(\alpha)\sigma_7(\alpha)+2\sigma_3(\alpha)\sigma_5(\alpha)=4a. 
     \]
\end{lemma}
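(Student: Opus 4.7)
The plan is to exploit the tower $\Q\subset\Q(\sqrt 2)\subset\Q(\zeta_8)$: from $p\equiv1\pmod 8$ I first factor $p$ in $\Z[\sqrt 2]$, then split the resulting prime further in $\Z[\zeta_8]$, and finally use that $\Z[\zeta_8]$ has class number one to write each prime above $p$ as a principal ideal $(\alpha)$; a totally-positive unit correction then pins $\alpha$ down so that $\alpha\,\sigma_7(\alpha)=a+b\sqrt 2$ on the nose.

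First I would observe that $p\equiv1\pmod 8$ makes $2$ a quadratic residue modulo $p$, so $p$ splits in the norm-Euclidean ring $\Z[\sqrt 2]$. A prime above $p$ is therefore generated by some $a+b\sqrt 2$; after possibly multiplying by the fundamental unit $1+\sqrt 2$ (whose norm is $-1$) and adjusting signs, I may take $a,b\in\Z^+$ with $a^2-2b^2=p$. Since $p>0$ this forces $a>b\sqrt 2$, so both $a\pm b\sqrt 2$ are positive, i.e., $a+b\sqrt 2$ is totally positive in $\Z[\sqrt 2]$. Next, $p\equiv1\pmod 8$ also splits completely in $\Q(\zeta_8)$, so I factor $(a+b\sqrt 2)\Z[\zeta_8]=\mathfrak P\cdot\sigma_7(\mathfrak P)$, with $\sigma_7$ being complex conjugation, and use class number one to write $\mathfrak P=(\alpha_0)$. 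The product $\alpha_0\,\sigma_7(\alpha_0)$ is fixed by $\sigma_7$, hence lies in $\Q(\sqrt 2)\cap\Z[\zeta_8]=\Z[\sqrt 2]$, and it must equal $u(a+b\sqrt 2)$ for some $u\in\Z[\sqrt 2]^\times$. Positivity of $|\alpha_0|^2$ and $|\sigma_3(\alpha_0)|^2=\sigma_3(\alpha_0\sigma_7(\alpha_0))$, together with the total positivity of $a+b\sqrt 2$, forces $u$ to be totally positive.

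At this point I would invoke Lemma~\ref{lem:u2k} to write $u=(1+\sqrt 2)^{2k}$ and set $\alpha:=\alpha_0(1+\sqrt 2)^{-k}$; because $1+\sqrt 2$ is real (hence self-conjugate) and a unit in $\Z[\zeta_8]$, the correction stays in $\Z[\zeta_8]$ and cleanly cancels, giving $\sigma_1(\alpha)\sigma_7(\alpha)=\alpha\,\sigma_7(\alpha)=a+b\sqrt 2$. Applying $\sigma_3$ and using $\sigma_3\sigma_7=\sigma_5$ in $(\Z/8\Z)^\times$ yields $\sigma_3(\alpha)\sigma_5(\alpha)=\sigma_3(a+b\sqrt 2)=a-b\sqrt 2$. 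Finally, since $\sigma_7=\overline{\sigma_1}$ and $\sigma_5=\overline{\sigma_3}$ as embeddings into $\C$, the canonical-embedding squared norm expands, exactly as in the proof of Theorem~\ref{Theorem:Z[8] generator}, as $2|\sigma_1(\alpha)|^2+2|\sigma_3(\alpha)|^2=2(a+b\sqrt 2)+2(a-b\sqrt 2)=4a$.

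The hard part is the unit-adjustment: the class-number-one input only gives a generator up to $\Z[\zeta_8]^\times$, whereas the conclusion requires $\alpha\,\sigma_7(\alpha)$ to hit $a+b\sqrt 2$ exactly. This is precisely where Lemma~\ref{lem:u2k} is essential---it constrains the obstructing unit $u$ to be an even power of $1+\sqrt 2$ in $\Z[\sqrt 2]$, so its square root $(1+\sqrt 2)^k$ lives in $\Z[\sqrt 2]\subset\Z[\zeta_8]$ and, being real, commutes with $\sigma_7$, letting the correction kill $u$ in the product. Any attempt to run the same argument without total positivity (or in a setting where the unit group has a nontrivial totally positive quotient modulo squares) would fail at exactly this step.
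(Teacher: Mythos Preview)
Your proposal is correct and follows essentially the same route as the paper: split $p$ in $\Z[\sqrt 2]$, split further in $\Z[\zeta_8]$ using class number one, observe that $\alpha_0\sigma_7(\alpha_0)\in\Z[\sqrt 2]$ is a totally positive unit times $a+b\sqrt 2$, invoke Lemma~\ref{lem:u2k} to write that unit as $(1+\sqrt 2)^{2k}$, and absorb $(1+\sqrt 2)^k$ into $\alpha$. The only cosmetic differences are that you phrase the descent to $\Z[\sqrt 2]$ via the fixed field of $\sigma_7$ whereas the paper phrases it as a relative norm $N_{\Q(\zeta_8)/\Q(\sqrt 2)}$, and your unit equation is written in the reciprocal direction, but the argument is identical.
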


\begin{proof}
    In $\Z[\sqrt{2}]$, we have
    \[
    (p)=(a+b\sqrt{2})(a-b\sqrt{2}). 
    \]
    In $\Z[\zeta_8]$, $(a+b\sqrt{2})$ splits two conjugate principal prime ideals:
    $$(a+b\sqrt{2})=(\sigma_1(\alpha'))(\sigma_7(\alpha')),\alpha'\in\Z[\zeta_{8}].$$
    Therefore, there exists a unit $u\in\Z[\zeta_8]^{\times}$ such that 
    \[
    a+b\sqrt{2}=u\sigma_1(\alpha')\sigma_7(\alpha').
    \]
    Note that 
    \[
    p=(a+b\sqrt{2})(a-b\sqrt{2})
    \]
    implies that $a+b\sqrt{2}$ is totally positive. Also, note that 
    \[
    \sigma_1(\alpha')\sigma_7(\alpha')=N_{\Q(\zeta_8)/\Q(\sqrt{2})}(\sigma_1(\alpha'))\in\Z[\sqrt{2}] 
    \]
    and we have
    \begin{align*}
    &\sigma_1(\alpha')\sigma_7(\alpha')=|\sigma_1(\alpha')|^2>0,\\
    &\sigma_3(\sigma_1(\alpha')\sigma_7(\alpha'))=\sigma_3(\alpha')\sigma_5(\alpha')=|\sigma_3(\alpha')|^2>0.
    \end{align*}
    So $\sigma_1(\alpha')\sigma_7(\alpha')$ is totally positive. Then $u\in\Z[\sqrt{2}]^\times$ and $u$ is totally positive. According to Lemma \ref{lem:u2k}, $u=(1+\sqrt{2})^{2k},k\in\Z$. Let $\alpha=(1+\sqrt{2})^k\alpha'$, then 
    \[
    a+b\sqrt{2}=\sigma_1(\alpha)\sigma_7(\alpha), a-b\sqrt{2}=\sigma_3(a+b\sqrt{2})=\sigma_3(\alpha)\sigma_5(\alpha). 
    \]
\end{proof}


\begin{theorem}\label{theorem: p916}
    Suppose $p\equiv9\pmod{16}$ and $\mathfrak{p}$ is a prime ideal of $\Z[\zeta_{2^{n+1}}]$ lying over $p$, then
    $$\lambda_1\left(\Sigma_{\Q(\zeta_{2^{n+1}})}(\mathfrak{p})\right)=\sqrt{2^na_p}.$$
\end{theorem}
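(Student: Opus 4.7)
The plan is to combine the previously-established tools---the inert decomposition depicted in Figure~\ref{Fig-9-16}, Theorem~\ref{Theorem:expandPan}, Theorem~\ref{Theorem:Z[8] generator}, and Lemma~\ref{p-1-8}---to reduce the problem to minimizing a convex function of a single integer parameter. First I would use the decomposition: since $p\equiv9\pmod{16}$, the ideal $\mathfrak{p}$ equals $\mathfrak{q}\Z[\zeta_{2^{n+1}}]$ for an inert prime $\mathfrak{q}$ of $\Z[\zeta_8]$ lying over $p$, so by Theorem~\ref{Theorem:expandPan} any shortest vector of $\mathfrak{q}$ is also a shortest vector of $\mathfrak{p}$. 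I then compare embedding norms: each of the four embeddings of $\Q(\zeta_8)$ lifts to exactly $2^{n-2}$ embeddings of $\Q(\zeta_{2^{n+1}})$ with the same absolute value on elements of $\Q(\zeta_8)$, so $\|\Sigma_{\Q(\zeta_{2^{n+1}})}(\alpha)\|^2 = 2^{n-2}\|\Sigma_{\Q(\zeta_8)}(\alpha)\|^2$ for any $\alpha\in\Q(\zeta_8)$. Thus it suffices to prove $\lambda_1(\Sigma_{\Q(\zeta_8)}(\mathfrak{q}))^2 = 4a_p$.

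To attack the reduced problem, Theorem~\ref{Theorem:Z[8] generator} guarantees that the shortest vector in $\mathfrak{q}$ is realized by a generator. Applying Lemma~\ref{p-1-8} to the factorization $p=(a_p+b_p\sqrt{2})(a_p-b_p\sqrt{2})$, I pick a base generator $\alpha_0$ of $\mathfrak{q}$ with $\sigma_1(\alpha_0)\sigma_7(\alpha_0)=a_p+b_p\sqrt{2}$; every generator is then of the form $\alpha_0\zeta_8^k\epsilon^m$ with $k,m\in\Z$, where $\epsilon=1+\sqrt{2}$. Since $|\zeta_8^k|=1$ under every complex embedding of $\Q(\zeta_8)$ and $\sigma_3(\epsilon)=1-\sqrt{2}=-\epsilon^{-1}$, a direct computation gives
\[
F(m):=\|\Sigma_{\Q(\zeta_8)}(\alpha_0\zeta_8^k\epsilon^m)\|^2 = 2(a_p+b_p\sqrt{2})\epsilon^{2m}+2(a_p-b_p\sqrt{2})\epsilon^{-2m},
\]
independent of $k$.

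It remains to show $\min_{m\in\Z}F(m)=F(0)=4a_p$. The function $F$ is strictly convex in $m\in\R$, with unique real minimizer $m^*=\frac{1}{4}\log_{\epsilon}((a_p-b_p\sqrt{2})/(a_p+b_p\sqrt{2}))<0$, so the only plausible integer minima are $m=0$ and $m=-1$. A short evaluation gives $F(-1)=12a_p-16b_p$, and the inequality $F(0)\leq F(-1)$ is equivalent to the strict inequality $a_p>2b_p$---exactly the content of Lemma~\ref{lemma:relation of p and -p} (equality would force $2b_p^2=p$, hence $p=2$, excluded by our hypothesis). Strict convexity disposes of $|m|\geq 2$, and multiplying by the factor $2^{n-2}$ from Step~1 yields $\lambda_1(\Sigma_{\Q(\zeta_{2^{n+1}})}(\mathfrak{p}))^2 = 2^n a_p$, as claimed. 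I expect this final comparison to be the main obstacle: ruling out the negative-$m$ candidate is precisely where Lemma~\ref{lemma:relation of p and -p} is used essentially, and it is the step in which the hypothesis $p\equiv9\pmod{16}$ (via the absence of the $p=2$ degeneracy) enters quantitatively.
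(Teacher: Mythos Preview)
Your proof is correct and follows the same architecture as the paper's: reduce to $\Z[\zeta_8]$ via Theorem~\ref{Theorem:expandPan}, invoke Theorem~\ref{Theorem:Z[8] generator} to replace ``shortest vector'' by ``shortest generator'', use Lemma~\ref{p-1-8} to pin down a base generator $\alpha_0$ with $|\alpha_0|^2=a_p+b_p\sqrt2$, and then minimize $F(m)=\|\Sigma_{\Q(\zeta_8)}(\alpha_0\epsilon^m)\|^2$ over $m\in\Z$.

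The only difference is in this last minimization. The paper observes that $F(m)=4x_m$ where $x_m+y_m\sqrt2=(a_p+b_p\sqrt2)\epsilon^{2m}$ is a totally positive element of norm $p$, so $(x_m,y_m)$ is a positive solution of $a^2-2b^2=p$ and hence $x_m\ge a_p$ by the very definition of $a_p$; this settles the minimum in one line without convexity or case analysis. Your route through convexity and Lemma~\ref{lemma:relation of p and -p} reaches the same conclusion but is more laborious, and one step is loosely phrased: from $m^*<0$ alone you cannot conclude that the integer minimum lies in $\{-1,0\}$---what actually closes the argument is that $F(0)\le F(-1)$ together with convexity forces $F(m)\ge F(-1)\ge F(0)$ for all $m\le-1$, while $m^*<0$ handles $m\ge0$. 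Finally, your closing remark misidentifies where the hypothesis $p\equiv9\pmod{16}$ is used: it enters through the decomposition behavior in Figure~\ref{Fig-9-16} (ensuring $\mathfrak{p}=\mathfrak{q}\Z[\zeta_{2^{n+1}}]$ with $\mathfrak{q}$ prime in $\Z[\zeta_8]$), not through the inequality $a_p\ge2b_p$, which holds for every odd prime $p\equiv\pm1\pmod8$.
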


\begin{proof}
    According to Lemma~\ref{p-1-8}, there exists
    $\alpha\in\Z[\zeta_8]$ such that 
    $$ p=(a_{p}+b_{p}\sqrt{2})((a_{p}-b_{p}\sqrt{2}))=\sigma_1(\alpha)\sigma_7(\alpha)\sigma_3(\alpha)\sigma_5(\alpha).$$
    In $\Z[\zeta_8]$, $(p)$ splits into 4 conjugate prime ideals:
    $$(p)=(\sigma_1(\alpha))(\sigma_7(\alpha))(\sigma_3(\alpha))(\sigma_5(\alpha)).$$
    Without loss of generality, we suppose $\mathfrak{p}\cap\Z[\zeta_8]=(\sigma_1(\alpha))$. According to Theorem~\ref{Theorem:expandPan}, the shortest element in $\mathfrak{p}$ is the same as the shortest element in $(\sigma_1(\alpha))$. Since there exists a generator of $(\sigma_1(\alpha))$ that is the shortest, in the following we prove
    \[
    \lambda_1\left(\Sigma_{\Q(\zeta_8)}((\sigma_1(\alpha)))\right)=\sqrt{4a_{p}}
    \]
    by determining that the length of the shortest generator is $\sqrt{4a_p}$.

    Note that $\Z[\zeta_8]^{\times}=\langle1+\sqrt{2},\zeta_8\rangle$. All generators in $(\sigma_1(\alpha))$ can be written as $\alpha\zeta_8^k(1+\sqrt{2})^n$ where $k,n\in\Z$. Then 
    \begin{align*}
        ||\Sigma_{\Q(\zeta_8)}(\alpha\zeta_8^k(1+\sqrt{2})^n)||^2&=2|\alpha\zeta_8^k(1+\sqrt{2})^n|^2+2|\sigma_3(\alpha\zeta_8^k(1+\sqrt{2})^n)|^2\\
        &=2(1+\sqrt{2})^{2n}|\alpha|^2+2(1-\sqrt{2})^{2n}|\sigma_3(\alpha)|^2\\
        &=2(1+\sqrt{2})^{2n}(a_p+b_p\sqrt{2})+2(1-\sqrt{2})^{2n}(a_p-b_p\sqrt{2}).
    \end{align*}
    
    Let $x_n+y_n\sqrt{2}=(1+\sqrt{2})^{2n}(a_p+b_p\sqrt{2})$. Note that $x_n>0,y_n>0$ and $x_n+y_n\sqrt{2}$ is totally positive. Then 
    $$||\Sigma_{\Q(\zeta_8)}(\alpha\zeta_8^k(1+\sqrt{2})^n)||^2=4x_n.$$
    Since
    \[
    N_{\Q(\sqrt{2})}(x_n+y_n\sqrt{2})=(1+\sqrt{2})^{2n}(1-\sqrt{2})^{2n}(a_p^2-2b_p^2)=p, 
    \]
    this implies that $(x_{n},y_{n})$ is a root of $a^2-2b^2=p$. So, $x_n\geq a_p$.

    Finally, we conclude that $\alpha$ is the shortest vector in $(\sigma_1(\alpha))$ with length $\sqrt{4a_p}$, and then $\alpha$ is the shortest vector in $\mathfrak{p}$ with length $\sqrt{2^{n}a_p}$. 
    
    \end{proof}

  Based on Algorithm~\ref{Algorithm-ap} and Theorem~\ref{theorem: p916}, we can obtain more efficient classical and quantum algorithms for computing $\lambda_1\left(\Sigma_{\Q(\zeta_{2^{n+1}})}(\mathfrak{p})\right)$.
    \subsection{The upper bound of the length of the shortest vector} \label{subsection:upperBoundp916}
    Recall that $a_{p}<\sqrt{2p}$, then according to Theorem~\ref{theorem: p916}, we can deduce an upper bound of $\lambda_1\left(\Sigma_{\Q(\zeta_{2^{n+1}})}(\mathfrak{p})\right)$.
    \begin{corollary}
     Suppose $p\equiv9\pmod{16}$ and $\mathfrak{p}$ is a prime ideal of $\Z[\zeta_{2^{n+1}}]$ lying over $p$, then 
     $$\lambda_1\left(\Sigma_{\Q(\zeta_{2^{n+1}})}(\mathfrak{p})\right)<\sqrt[4]{2^{2n+1}p}.$$
    \end{corollary}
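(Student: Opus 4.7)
The plan is to derive this corollary directly by substituting the known upper bound on $a_p$ into the exact formula for the shortest-vector length established in Theorem~\ref{theorem: p916}. There is essentially no new content beyond a one-line computation, so the proposal is really just an organization of two facts already proved earlier in the excerpt.

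First, I would invoke Theorem~\ref{theorem: p916}, which for $p \equiv 9 \pmod{16}$ and $\mathfrak{p}$ a prime ideal of $\Z[\zeta_{2^{n+1}}]$ lying over $p$ gives the exact equality
\[
\lambda_1\bigl(\Sigma_{\Q(\zeta_{2^{n+1}})}(\mathfrak{p})\bigr) = \sqrt{2^{n} a_{p}}.
\]
Then I would recall the Pell bound stated in the preliminaries (citing \cite[Theorem 3.3]{Conrad2015PELLSEI}), namely $a_{p} < \sqrt{2p}$, which applies because $p \equiv 9 \pmod{16}$ is in particular $p \equiv 1 \pmod 8$ so the equation $a^2 - 2b^2 = p$ is solvable in positive integers and $(a_p, b_p)$ is well-defined.

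Combining these two facts,
\[
\lambda_1\bigl(\Sigma_{\Q(\zeta_{2^{n+1}})}(\mathfrak{p})\bigr) = \sqrt{2^n a_p} < \sqrt{2^n \sqrt{2p}} = \sqrt[4]{2^{2n} \cdot 2p} = \sqrt[4]{2^{2n+1} p},
\]
which is exactly the claimed upper bound. I would also briefly compare with Minkowski's bound $2^n\sqrt[4]{p}$ to emphasize that the factor of improvement is $\sqrt[4]{2^{2n-1}}$, matching the claim made in the introduction (Theorem~\ref{main-theorem}).

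The only ``obstacle'' is confirming that the Pell bound $a_p < \sqrt{2p}$ applies in the regime $p \equiv 9 \pmod{16}$, but this is immediate since $9 \equiv 1 \pmod 8$ and the bound is quoted in the preliminaries without restriction beyond solvability. Thus the proof is a clean two-step consequence of the main theorem of the section and a classical estimate, requiring no additional lemmas.
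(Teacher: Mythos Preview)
Your proposal is correct and follows exactly the paper's own approach: the paper simply recalls $a_p<\sqrt{2p}$ and combines it with Theorem~\ref{theorem: p916} to obtain the bound, then compares with Minkowski's bound. There is nothing to add.
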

    As comparison, Minkowski's Theorem implies an upper bound as follows:
    $$\lambda_1\left(\Sigma_{\Q(\zeta_{2^{n+1}})}(\mathfrak{p})\right)<\sqrt{2^{n}}\det(\mathfrak{p})^{\frac{1}{2^n}}=\sqrt{2^{n}}\left(N(\mathfrak{p})\sqrt{\Delta\Q({\zeta_{2^{n+1}}})}\right)^{\frac{1}{2^n}}
    =2^n\sqrt[4]{p}.$$
    Note that $\sqrt[4]{2^{2n+1}p}=2^{\frac{2n+1}{4}}\sqrt[4]{p}<2^n\sqrt[4]{p}$. So the upper bound we propose is tighter than the upper bound obtained from Minkowski's Theorem.
    \begin{ex}
    $\mathfrak{p}=(\zeta_8+\zeta_8^2+3\zeta^3)\Z[\zeta_8]$ is a prime ideal lying over $89$. $\zeta_8+\zeta_8^2+3\zeta^3$ is the shortest vector in $\mathfrak{p}$ with length $2\sqrt{11}\approx6.33$. The Minkowski bound for $\mathfrak{p}$ is $4\sqrt[4]{89}\approx12.29$. The new upper bound we show is $\sqrt[4]{2^589}\approx7.31$.
    \end{ex}
    
\section{SVP for prime ideals lying over $p\equiv7\pmod{16}$ in $\Z[\zeta_{2^{n+1}}]$}
In this section, we analyze the length of the shortest vector for prime ideals in $\Z[\zeta_{2^{n+1}}]$ when $p\equiv7\pmod{16}$, and give an upper bound for it.
\subsection{High level idea of the analysis when $p\equiv7\pmod{16}$.} 

If $p\equiv7\pmod{16}$, prime ideals lying over $p$ in $\Z[\sqrt{2}]$ split into two prime ideals in $\Z[\zeta_{16}+\zeta_{16}^7]$. They remain inert and do not split in $\Z[\zeta_{2^{n+1}}]$, as illustrated in Figure~\ref{Figure:p=7decomposition}.

\begin{figure}[h] 
  \centering
    \begin{tikzpicture}
        \node (1) at (-0.5,0) {$\Q$};
        \node (2) at (-0.5,1) {$\Q(\sqrt{2})$};
        \node (3) at (-0.5,2) {$\Q(\zeta_{16}+\zeta_{16}^7)$};
        \node (4) at (-0.5,3) {$\Q(\zeta_{16})$};
        \draw[-] (1) -- (2) -- (3) -- (4);
        \node (16) at (-0.5,4) {$\Q(\zeta_{2^{n+1}})$};
        \draw[line width=1pt,
        dash pattern=on 0pt off 4pt,
        dash phase=3pt,
        line cap=round] (4) -- (16);

        \node (5) at (2.5,0) {$(p)$};
        \node (6) at (1.5,1) {$\mathfrak{p}_1$};
        \node (7) at (3.5,1) {$\mathfrak{p}_2$};
        \node (8) at (1,2) {$\mathfrak{p}_{11}$};
        \node (9) at (2,2) {$\mathfrak{p}_{12}$};
        \node (10) at (3,2) {$\mathfrak{p}_{21}$};
        \node (11) at (4,2) {$\mathfrak{p}_{22}$};
        \draw[-] (5)--(6)--(8);
        \draw[-] (5)--(7)--(11);
        \draw[-] (6)--(9);
        \draw[-] (7)--(10);

        \node (12) at (1,3) {$\overline{\mathfrak{p}}_{11}$};
        \node (13) at (2,3) {$\overline{\mathfrak{p}}_{12}$};
        \node (14) at (3,3) {$\overline{\mathfrak{p}}_{21}$};
        \node (15) at (4,3) {$\overline{\mathfrak{p}}_{22}$};
        \node (17) at (1,4) {$\tilde{\mathfrak{p}}_{11}$};
        \node (18) at (2,4) {$\tilde{\mathfrak{p}}_{12}$};
        \node (19) at (3,4) {$\tilde{\mathfrak{p}}_{21}$};
        \node (20) at (4,4) {$\tilde{\mathfrak{p}}_{22}$};
        \draw[-] (8)--(12);
        \draw[-] (9)--(13);
        \draw[-] (10)--(14);
        \draw[-] (11)--(15);
        
        \draw[line width=1pt,
        dash pattern=on 0pt off 4pt,
        dash phase=3pt,
        line cap=round] (12) -- (17);

        \draw[line width=1pt,
        dash pattern=on 0pt off 4pt,
        dash phase=3pt,
        line cap=round] (13) -- (18);

        \draw[line width=1pt,
        dash pattern=on 0pt off 4pt,
        dash phase=3pt,
        line cap=round] (14) -- (19);

        \draw[line width=1pt,
        dash pattern=on 0pt off 4pt,
        dash phase=3pt,
        line cap=round] (15) -- (20);

    \end{tikzpicture}
    \caption{Decomposition of prime ideals when $p\equiv 7\pmod {16}$}\label{Figure:p=7decomposition}
\end{figure}

Similarly, according to Theorem~\ref{Theorem:expandPan}, for a prime ideal $\mathfrak{p}$ lying over $p$ in $\Z[\zeta_{2^{n+1}}]$, the shortest vector in $\mathfrak{p}$ can be determined through finding the shortest vector in $\mathfrak{p}$ in $\mathfrak{p}\cap\Z[\zeta_{16}]$. But, here we first prove that the shortest vector $\alpha$ in $\mathfrak{p} \cap \Z[\zeta_{16}+\zeta_{16}^7]$ is also the shortest vector in $\mathfrak{p} \cap \Z[\zeta_{16}]$, thereby reducing the problem of finding the shortest vector in $\mathfrak{p}$ to that of finding the shortest vector in $\mathfrak{p} \cap \Z[\zeta_{16}+\zeta_{16}^7]$. In this way, we can exploit the result that in $\mathbb{Z}[\zeta_{16}+\zeta_{16}^7]$ there exists a generator which is the shortest vector (Theorem~\ref{Theorem:Z[16167] generator}), thus simplifying the task of finding the shortest vector.

Like the factorization relation between generators of prime ideals in $\Z[\zeta_8]$ and $\Z[\sqrt{2}]$ , the generators of such prime ideals in $\Z[\zeta_{16}+\zeta_{16}^7]$ also have a factorization relation with the generators of the corresponding prime ideals in $\Z[\sqrt{2}]$. This helps us to understand the length of the shortest vector in corresponding prime ideals.

\subsection{Detailed analysis}

To begin with, we demonstrate that for such prime ideals, the shortest element is preserved when extending from
\(\mathbb{Z}[\zeta_{16}+\zeta_{16}^7]\) to \(\mathbb{Z}[\zeta_{16}]\) in Lemma~\ref{lemma:1616216}. This requires the following Lemma~\ref{Lemma-7-16}.

\begin{lemma}\label{Lemma-7-16}
    For a rational prime $p\equiv7\pmod{16}$, $p=(a+b\sqrt{2})(a-b\sqrt{2})$. Then there exists $\alpha\in\Z[\zeta_{16}+\zeta_{16}^7]$ such that
    \[
    a+b\sqrt{2}=\sigma_1(\alpha)\sigma_{15}(\alpha)=\sigma_7(\alpha)\sigma_9(\alpha),
   \]
   \[
    a-b\sqrt{2}=\sigma_3(\alpha)\sigma_{13}(\alpha)=\sigma_5(\alpha)\sigma_{11}(\alpha).
    \]
\end{lemma}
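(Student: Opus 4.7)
The plan is to mirror the proof of Lemma~\ref{p-1-8}, now working in the tower $\Q\subset\Q(\sqrt{2})\subset\Q(\zeta_{16}+\zeta_{16}^7)\subset\Q(\zeta_{16})$. The key Galois-theoretic facts are that $\Q(\zeta_{16}+\zeta_{16}^7)$ is the fixed field of $\langle\sigma_7\rangle$ in $\mathrm{Gal}(\Q(\zeta_{16})/\Q)$, and that $\mathrm{Gal}(\Q(\zeta_{16}+\zeta_{16}^7)/\Q(\sqrt{2}))$ is generated by the restriction of $\sigma_9$. Since $15\equiv 9\cdot 7$, $5\equiv 3\cdot 7$, and $13\equiv 11\cdot 7\pmod{16}$, it follows that $\sigma_{15}(\alpha)=\sigma_9(\alpha)$, $\sigma_5(\alpha)=\sigma_3(\alpha)$, and $\sigma_{13}(\alpha)=\sigma_{11}(\alpha)$ for every $\alpha\in\Z[\zeta_{16}+\zeta_{16}^7]$. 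Consequently, the two factorizations claimed for $a+b\sqrt{2}$ (and the two for $a-b\sqrt{2}$) coincide automatically, so the only substantive identity to establish is $\sigma_1(\alpha)\sigma_9(\alpha)=a+b\sqrt{2}$; the identity $\sigma_3(\alpha)\sigma_{11}(\alpha)=a-b\sqrt{2}$ then follows by applying $\sigma_3$ and using $\sigma_3(\sqrt{2})=-\sqrt{2}$.

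First I would analyze the splitting of $(a+b\sqrt{2})$ in $\Z[\zeta_{16}+\zeta_{16}^7]$. Because $p\equiv 7\pmod{16}$, the Frobenius of $p$ in $\mathrm{Gal}(\Q(\zeta_{16})/\Q)$ equals $\sigma_7$, which is precisely the nontrivial element of $\mathrm{Gal}(\Q(\zeta_{16})/\Q(\zeta_{16}+\zeta_{16}^7))$. Hence the decomposition group at $p$ becomes trivial inside $\mathrm{Gal}(\Q(\zeta_{16}+\zeta_{16}^7)/\Q(\sqrt{2}))$, and $(a+b\sqrt{2})$ splits as $\mathfrak{P}\cdot\sigma_9(\mathfrak{P})$ into two distinct prime ideals of $\Z[\zeta_{16}+\zeta_{16}^7]$. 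Appealing to the fact that $\Z[\zeta_{16}+\zeta_{16}^7]$ has class number one (checkable by a routine Minkowski-bound computation for this degree-$4$ totally imaginary field with minimal polynomial $x^4+4x^2+2$), we can write $\mathfrak{P}=(\alpha')$ for some $\alpha'$.

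Given a generator, the remaining work is the same unit adjustment as in Lemma~\ref{p-1-8}. Since $\sigma_1(\alpha')\sigma_9(\alpha')$ also generates the $\Z[\sqrt{2}]$-ideal $(a+b\sqrt{2})$, we have $a+b\sqrt{2}=u\cdot\sigma_1(\alpha')\sigma_9(\alpha')$ for some $u\in\Z[\zeta_{16}+\zeta_{16}^7]^{\times}$, and both sides lying in $\Z[\sqrt{2}]$ forces $u\in\Z[\sqrt{2}]^{\times}=\langle-1,1+\sqrt{2}\rangle$. To pin down $u$ I would check total positivity in $\Q(\sqrt{2})$: complex conjugation acts as $\sigma_9$ in the embedding $\sigma_1$ and as $\sigma_{11}$ in the embedding $\sigma_3$, so $\sigma_1(\alpha')\sigma_9(\alpha')=|\sigma_1(\alpha')|^2$ and $\sigma_3(\alpha')\sigma_{11}(\alpha')=|\sigma_3(\alpha')|^2$ are both positive, and $a+b\sqrt{2}$ is totally positive because $a^2-2b^2=p>0$ forces $a>b\sqrt{2}>0$. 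Lemma~\ref{lem:u2k} then gives $u=(1+\sqrt{2})^{2k}$, and $\alpha=(1+\sqrt{2})^{k}\alpha'$ is the desired element, since $(1+\sqrt{2})^{k}$ is fixed by $\sigma_9$.

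The step I expect to be the main obstacle is a clean justification of principality of $\mathfrak{P}$. If directly citing $h(\Q(\zeta_{16}+\zeta_{16}^7))=1$ is to be avoided, a workable fallback is to lift $\mathfrak{P}$ to $\Z[\zeta_{16}]$ (whose class number is one), pick a generator $\beta$ there, and use Hilbert~90 for the cyclic order-two extension $\Q(\zeta_{16})/\Q(\zeta_{16}+\zeta_{16}^7)$ to rescale $\beta$ into $\Q(\zeta_{16}+\zeta_{16}^7)$ up to a unit factor; once the generator descends, the argument above goes through verbatim. Every other step is routine bookkeeping with the Galois action.
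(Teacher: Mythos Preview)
Your proposal is correct and follows essentially the same argument as the paper: split $(a+b\sqrt{2})$ in $\Z[\zeta_{16}+\zeta_{16}^7]$ into two conjugate principal primes, write one factor as $(\alpha')$, observe that both $a+b\sqrt{2}$ and $\sigma_1(\alpha')\sigma_{15}(\alpha')=|\sigma_1(\alpha')|^2$ are totally positive in $\Q(\sqrt{2})$, and absorb the resulting totally positive unit via Lemma~\ref{lem:u2k}. Your choice of $\sigma_9$ in place of the paper's $\sigma_{15}$ is purely cosmetic (they agree on the subfield, as you note), and you are in fact more explicit than the paper both about why the duplicated identities collapse and about the principality of $\mathfrak{P}$.
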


\begin{proof}
    In $\Z[\sqrt{2}]$, we have 
    \[
    (p)=(a+b\sqrt{2})(a-b\sqrt{2}).
    \]
    In $\Z[\zeta_{16}+\zeta_{16}^7]$, $(a+b\sqrt{2})$ splits into two conjugate principal prime ideals:
    $$(a+b\sqrt{2})=(\sigma_1(\alpha'))(\sigma_{15}(\alpha')),\alpha'\in\Z[\zeta_{16}+\zeta_{16}^7].$$
    Therefore, there exists a unit $u\in\Z[\zeta_{16}+\zeta_{16}^7]^{\times}$ such that
    \[
     a+b\sqrt{2}=u\sigma_1(\alpha')\sigma_{15}(\alpha').
    \]
    Note that $p=(a+b\sqrt{2})(a-b\sqrt{2})$ implies that $a+b\sqrt{2}$ is totally positive. Also, note that
    \[
    \sigma_1(\alpha')\sigma_{15}(\alpha')=N_{\Q(\zeta_{16}+\zeta_{16}^7)/\Q(\sqrt{2})}(\sigma_1(\alpha'))\in\Z[\sqrt{2}]
    \]
     and
    \begin{align*}
    &\sigma_1(\alpha')\sigma_{15}(\alpha')=|\sigma_1(\alpha')|^2>0,\\
    &\sigma_3(\sigma_1(\alpha')\sigma_{15}(\alpha'))=\sigma_3(\alpha')\sigma_{13}(\alpha')=|\sigma_3(\alpha')|^2>0.
    \end{align*}
     Therefore, $\sigma_1(\alpha')\sigma_{15}(\alpha')$ is totally positive. Then $u\in\Z[\sqrt{2}]^\times$ and $u$ is totally positive. According to Lemma \ref{lem:u2k}, $u=(1+\sqrt{2})^{2k},k\in\Z$. Let $\alpha=(1+\sqrt{2})^k\alpha'$, then
    $a+b\sqrt{2}=\sigma_1(\alpha)\sigma_{15}(\alpha)=\sigma_7(\alpha)\sigma_9({\alpha})$. $a-b\sqrt{2}=\sigma_3(a+b\sqrt{2})=\sigma_3(\alpha)\sigma_{13}(\alpha)=\sigma_5(\alpha)\sigma_{11}(\alpha)$.
\end{proof}

\begin{lemma} \label{lemma:1616216}
    Suppose~$p\equiv7\pmod{16}$~and $\mathfrak{p}$ is a prime ideal of $\Z[\zeta_{16}+\zeta_{16}^7]$ lying over $p$. Let $\alpha$ be the shortest vector in $\mathfrak{p}$, then $\alpha$ is also the shortest vector in $\mathfrak{p}\Z[\zeta_{16}]$.
\end{lemma}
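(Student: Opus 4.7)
My plan is to leverage Theorem~\ref{Theorem:Z[16167] generator}: every ideal of $\Z[\zeta_{16}+\zeta_{16}^7]$ admits a generator that is the shortest vector, so I may assume $\mathfrak{p}=(\alpha)$. Since the shortest generator of $\mathfrak{p}$ minimizes $\mathrm{Tr}_{\Q(\sqrt{2})/\Q}(\alpha\bar\alpha)=2a$ among Pell solutions of $a^2-2b^2=p$, Lemma~\ref{Lemma-7-16} forces $\alpha\bar\alpha=a_p+b_p\sqrt{2}$. The goal then becomes $\|\Sigma_{\Q(\zeta_{16})}(\alpha\beta)\|\geq\|\Sigma_{\Q(\zeta_{16})}(\alpha)\|$ for every nonzero $\beta\in\Z[\zeta_{16}]$.

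My main tool will be a trace-form expression for the canonical-embedding length squared. Writing $\mathbb{K}=\Q(\zeta_{16}+\zeta_{16}^7)$, $\mathbb{K}^+=\Q(\sqrt{2})$, and $\mathbb{L}^+=\Q(\zeta_{16}+\zeta_{16}^{-1})$, the fact that $\alpha\bar\alpha\in\mathbb{K}^+$ and $\beta\bar\beta\in\mathbb{L}^+$ lets me factor the trace through $\mathbb{K}^+$:
\[
\|\Sigma_{\Q(\zeta_{16})}(\alpha\beta)\|^2 \;=\; \mathrm{Tr}_{\Q(\zeta_{16})/\Q}(\alpha\bar\alpha\cdot\beta\bar\beta) \;=\; 2\,\mathrm{Tr}_{\mathbb{K}^+/\Q}\!\bigl(\alpha\bar\alpha\cdot\delta\bigr),
\]
where $\delta:=\mathrm{Tr}_{\mathbb{L}^+/\mathbb{K}^+}(\beta\bar\beta)\in\Z[\sqrt{2}]$. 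Writing $\delta=d_0+d_1\sqrt{2}$, this expands to $\|\Sigma_{\Q(\zeta_{16})}(\alpha\beta)\|^2=4a_pd_0+8b_pd_1$; the case $\beta=1$ gives $\delta=2$ and $\|\Sigma_{\Q(\zeta_{16})}(\alpha)\|^2=8a_p$. The lemma then reduces to the numerical inequality $a_p(d_0-2)\geq -2b_pd_1$.

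To close this, I will extract three constraints on $(d_0,d_1)$. Total positivity of $\beta\bar\beta$ in $\mathbb{L}^+$ descends to $\delta$, forcing $d_0>|d_1|\sqrt{2}$. A short Ramanujan-sum computation shows $\mathrm{Tr}_{\Q(\zeta_{16})/\Q}(\zeta_{16}^m)=0$ for $1\leq|m|\leq 7$; writing $\beta=\sum_{k=0}^{7}c_k\zeta_{16}^k$ then gives $\|\Sigma_{\Q(\zeta_{16})}(\beta)\|^2=8\sum c_k^2$, and comparing with $\|\Sigma_{\Q(\zeta_{16})}(\beta)\|^2=4d_0$ forces $d_0=2\sum c_k^2$ to be an even integer $\geq 2$, with $d_0=2$ exactly when $\beta=\pm\zeta_{16}^k$ (yielding $\delta=2$, $d_1=0$, and equality in the target inequality). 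Finally, Lemma~\ref{lemma:relation of p and -p} supplies $a_p\geq 2b_p$.

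In the case $d_0\geq 4$ (even), total positivity gives $|d_1|<d_0/\sqrt{2}$, and the elementary fact $d_0/\sqrt{2}<d_0-1$ for $d_0\geq 4$ upgrades this to the integer bound $|d_1|\leq d_0-2$. Chaining with $a_p\geq 2b_p$ yields $a_p(d_0-2)\geq 2b_p(d_0-2)\geq 2b_p|d_1|\geq -2b_pd_1$, which closes the inequality. The main delicacy is that $|d_1|\leq d_0-2$ is sharp when $d_0\in\{4,6\}$, so the three ingredients — total positivity of $\delta$, evenness of $d_0$ inherited from the $\zeta_{16}$-expansion, and the structural inequality $a_p\geq 2b_p$ — are all essential rather than decorative; dropping any one of them produces fictitious $\delta$ violating the bound.
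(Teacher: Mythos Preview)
Your proof is correct and follows essentially the same strategy as the paper: reduce to a generator $\alpha$ with $\alpha\bar\alpha=a_p+b_p\sqrt{2}$, express $\|\Sigma_{\Q(\zeta_{16})}(\alpha\beta)\|^2$ as a $\Q(\sqrt{2})$-trace (your $(d_0,d_1)$ is exactly twice the paper's $(x,y)$), and finish via total positivity together with $a_p\geq 2b_p$. Your case split is in fact tidier than the paper's four cases, since the parity observation $d_0\in 2\Z$ lets you pass directly from $|d_1|<d_0/\sqrt{2}$ to the integer bound $|d_1|\leq d_0-2$ whenever $d_0\geq 4$, collapsing the paper's Cases~2--4 into a single chain $a_p(d_0-2)\geq 2b_p(d_0-2)\geq 2b_p|d_1|\geq -2b_pd_1$.
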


\begin{proof}
    According to subsection~\ref{Sub:generator Z1616}, we may assume that $\alpha$ is a unit. Let
    \[
    p=(a_p+b_p\sqrt{2})(a_p-b_p\sqrt{2}).
    \]
     Set $\alpha'\in\Z[\zeta_{16}+\zeta_{16}^7]$ such that
    \[
    \sigma_1(\alpha')\sigma_{15}(\alpha')=a_p+b_p\sqrt{2},
    \]
    then we suppose 
    \[
    \mathfrak{p}=\alpha'\Z[\zeta_{16}+\zeta_{16}^7]
    \]
    and
    \[
    \mathfrak{p}\Z[\zeta_{16}]=\alpha'\Z[\zeta_{16}].
    \]
    We now prove that for any non-zero element $\gamma=\alpha'\beta$ in $\mathfrak{p}\Z[\zeta_{16}]$ with 
    $$\beta=a+b\zeta+c\zeta^2+d\zeta^3+e\zeta^4+f\zeta^5+g\zeta^6+h\zeta^7\in\Z[\zeta_{16}],$$
    it holds that
    $$||\Sigma_{\Q(\zeta_{16})}(\gamma)||^2\geq||\Sigma_{\Q(\zeta_{16})}(\alpha')||^2\geq||\Sigma_{\Q(\zeta_{16})}(\alpha)||^2.$$
    We compute
    \begin{align*}
         & \frac{1}{2}||\Sigma_{\Q(\zeta_{16})}(\gamma)||^2 \\
        =&\sigma_1(\gamma)\sigma_{15}(\gamma)+\sigma_7(\gamma)\sigma_9(\gamma)+\sigma_3(\gamma)\sigma_{13}(\gamma)+\sigma_5(\gamma)\sigma_{11}(\gamma)\\
        =&|\sigma_{1}(\alpha')|^2|\sigma_{1}(\beta)|^2+|\sigma_{7}(\alpha')|^2|\sigma_{7}(\beta)|^2+|\sigma_{3}(\alpha')|^2|\sigma_{3}(\beta)|^2+|\sigma_{5}(\alpha')|^2|\sigma_{5}(\beta)|^2\\
        =&(a_{p}+b_{p}\sqrt{2})(|\sigma_{1}(\beta)|^2+|\sigma_{7}(\beta)|^2)+(a_{p}-b_{p}\sqrt{2})(|\sigma_{3}(\beta)|^2+|\sigma_{5}(\beta)|^2)\\
        =&2(a_{p}+b_{p}\sqrt{2})(x+y\sqrt{2})+2(a_{p}-b_{p}\sqrt{2})(x-y\sqrt{2})\\
        =&4a_{p}x+8b_{p}y,
    \end{align*}
    where 
    \[
    x=a^2 + b^2 + c^2 + d^2 + e^2 + f^2 + g^2 + h^2,
    \]
    \[
    y=ac + bd + ce + df - ag + eg - bh + fh.
    \]
    Since $\beta$ is non-zero, 
    \[
    x+y\sqrt{2}>0,x-y\sqrt{2}>0,x\geq1.
    \]  
    Now we demonstrate that
    \[
    \frac{1}{2}||\Sigma_{\Q(\zeta_{16})}(\gamma)||^2\geq\frac{1}{2}||\Sigma_{\Q(\zeta_{16})}(\alpha')||^2=4a_{p}
    \]
    by considering several cases.
    
    \noindent\textbf{Cases 1:} If $y\geq0$, then we have
    \[
    \frac{1}{2}||\Sigma_{\K}(\gamma)||^2\geq4a_{p}x\geq4a_{p}.
    \]
    
    \noindent\textbf{Cases 2:} If $y<0$ and $x+y\sqrt{2}\geq1$, then we have
    \[
    \frac{1}{2}||\Sigma_{\K}(\gamma)||^2\geq4a_{p}.
    \]

    \noindent\textbf{Cases 3:} If $y<0$, $0< x+y\sqrt{2}<1$ and $1\leq x<4$, then 
    \[
    x+y\sqrt{2}\in \{2-\sqrt{2}, 3-2\sqrt{2} \}.
    \]
    If $x+y\sqrt{2}=2-\sqrt{2}$, recall that $a_p\geq2b_p$, then we have 
    $$\frac{1}{2}||\Sigma_{\K}(\gamma)||^2=8a_{p}-8b_{p}\geq4a_{p}.$$
    If $x+y\sqrt{2}=3-2\sqrt{2}$, we have
    $$\frac{1}{2}||\Sigma_{\K}(\gamma)||^2=12a_{p}-16b_{p}\geq4a_{p}.$$

     \noindent\textbf{Cases 4:} If $y<0$, $0\leq x+y\sqrt{2}<1$ and $x\geq4$, then according to $x+y\sqrt{2}>0$, we have
     $$\frac{1}{2}||\Sigma_{\K}(\gamma)||^2=4a_{p}x+8b_{p}y>4x(a_{p}-b_{p}\sqrt{2})\geq4x(1-\frac{\sqrt{2}}{2})a_{p}>4a_{p}.$$
     Therefore, we have \[||\Sigma_{\Q(\zeta_{16})}(\gamma)||^2\geq||\Sigma_{\Q(\zeta_{16})}(\alpha')||^2\geq||\Sigma_{\Q(\zeta_{16})}(\alpha)||^2. \] 
\end{proof}

Lemma~\ref{lemma:1616216} implies that SVP for a prime ideal $\mathfrak{a}$ of $\mathbb{Z}[\zeta_{16}]$ lying over a prime $p \equiv 7 \pmod{16}$ can be reduced to the SVP in a $4$-dimensional lattice, thereby improving the efficiency of solving the SVP for $\mathfrak{a}$.

In the following Theorem~\ref{theorem:p716}, we discuss the length of the shortest vector in prime ideal of $\Z[\zeta_{2^{n+1}}]$ lying over $p$ when $p\equiv7\pmod{16}$.

\begin{theorem}\label{theorem:p716}
    Suppose~$p\equiv7\pmod{16}$~and $\mathfrak{p}$ is a prime ideal of $\Z[\zeta_{2^{n+1}}]$ lying over $p$, then 
    $$\lambda_1\left(\Sigma_{\Q(\zeta_{2^{n+1}})}(\mathfrak{p})\right)=\sqrt{2^na_p}.$$
\end{theorem}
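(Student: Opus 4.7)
The plan is to cascade the SVP for $\mathfrak p$ down the tower $\Q(\zeta_{2^{n+1}})\supset\Q(\zeta_{16})\supset\Q(\zeta_{16}+\zeta_{16}^7)$ and reduce it to a one-variable minimization controlled by the Pell equation $a^2-2b^2=p$. Since $p\equiv 7\pmod{16}$, Figure~\ref{Figure:p=7decomposition} shows that $\mathfrak p$ is inert above a prime of $\Z[\zeta_{16}]$, which is in turn inert above a prime $\mathfrak q$ of $\Z[\zeta_{16}+\zeta_{16}^7]$. Theorem~\ref{Theorem:expandPan} therefore identifies the shortest vector of $\mathfrak p$ with the shortest vector of $\mathfrak q\Z[\zeta_{16}]$, and Lemma~\ref{lemma:1616216} identifies the latter with the shortest vector of $\mathfrak q$ itself (regarded in the smaller ring). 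By Theorem~\ref{Theorem:Z[16167] generator}, this shortest vector can be taken to be a generator of $\mathfrak q$.

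Using Lemma~\ref{Lemma-7-16}, I fix a generator $\alpha$ of $\mathfrak q$ normalized so that $\sigma_1(\alpha)\sigma_{15}(\alpha)=a_p+b_p\sqrt 2$ and $\sigma_3(\alpha)\sigma_{13}(\alpha)=a_p-b_p\sqrt 2$. Because $\Z[\zeta_{16}+\zeta_{16}^7]^{\times}=\langle 1+\sqrt 2,-1\rangle$, every generator of $\mathfrak q$ takes the form $\pm\alpha\epsilon^m$ with $\epsilon=1+\sqrt 2$. Using $\sigma_3(\sqrt 2)=-\sqrt 2$, so that $\sigma_3(\epsilon)=-\epsilon^{-1}$, and pairing the four embeddings of $\Q(\zeta_{16}+\zeta_{16}^7)$ into two complex-conjugate pairs, the canonical-embedding norm becomes
\[
\|\Sigma_{\Q(\zeta_{16}+\zeta_{16}^7)}(\alpha\epsilon^m)\|^2
=2(a_p+b_p\sqrt 2)\epsilon^{2m}+2(a_p-b_p\sqrt 2)\epsilon^{-2m}.
\]
Setting $x_m+y_m\sqrt 2:=\epsilon^{2m}(a_p+b_p\sqrt 2)\in\Z[\sqrt 2]$, its $\Q(\sqrt 2)/\Q$-conjugate is $x_m-y_m\sqrt 2=\epsilon^{-2m}(a_p-b_p\sqrt 2)$, so the right-hand side collapses to $4x_m$, while $x_m^2-2y_m^2=p$. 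Thus $(x_m,|y_m|)$ is a positive integer solution of $a^2-2b^2=p$, forcing $x_m\geq a_p$ with equality precisely at $m=0$. Consequently the shortest generator of $\mathfrak q$ is $\alpha$ itself, with $\|\Sigma_{\Q(\zeta_{16}+\zeta_{16}^7)}(\alpha)\|^2=4a_p$.

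Finally I lift this length back to $\Q(\zeta_{2^{n+1}})$. For any $\beta\in\Z[\zeta_{16}+\zeta_{16}^7]$, each of the four embeddings of $\Q(\zeta_{16}+\zeta_{16}^7)$ into $\C$ has exactly $[\Q(\zeta_{2^{n+1}}):\Q(\zeta_{16}+\zeta_{16}^7)]=2^{n-2}$ extensions to $\Q(\zeta_{2^{n+1}})$, all agreeing on $\beta$. Hence
\[
\|\Sigma_{\Q(\zeta_{2^{n+1}})}(\alpha)\|^2
=2^{n-2}\cdot\|\Sigma_{\Q(\zeta_{16}+\zeta_{16}^7)}(\alpha)\|^2
=2^{n-2}\cdot 4a_p=2^n a_p,
\]
which gives $\lambda_1(\Sigma_{\Q(\zeta_{2^{n+1}})}(\mathfrak p))=\sqrt{2^n a_p}$.

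The substantive difficulty is already absorbed into Lemma~\ref{lemma:1616216}, whose case analysis on the sign of $y$ and on the size of $x+y\sqrt 2$ is what handles the unit group of $\Z[\zeta_{16}]$ being strictly larger than that of $\Z[\zeta_{16}+\zeta_{16}^7]$. Once that lemma is granted, the remaining argument is the same one-parameter Pell minimization used in the proof of Theorem~\ref{theorem: p916}, and I expect no further obstacle beyond keeping track of embedding multiplicities along the tower.
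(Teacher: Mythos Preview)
Your proposal is correct and follows essentially the same route as the paper's own proof: reduce via Theorem~\ref{Theorem:expandPan} and Lemma~\ref{lemma:1616216} to the prime $\mathfrak q$ of $\Z[\zeta_{16}+\zeta_{16}^7]$, invoke Theorem~\ref{Theorem:Z[16167] generator} so that the shortest vector is a generator, normalize the generator via Lemma~\ref{Lemma-7-16}, and then run the one-parameter Pell minimization over $\langle 1+\sqrt 2,-1\rangle$ exactly as in Theorem~\ref{theorem: p916}. The only cosmetic difference is that you spell out the embedding-multiplicity factor $2^{n-2}$ in the final lift, whereas the paper leaves that implicit.
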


\begin{proof}

    The problem is reduced to find the shortest generator in $\mathfrak{p}\cap\Z[\zeta_{16}+\zeta_{16}^7]$. According to Lemma~\ref{Lemma-7-16}, there exists $\alpha\in\Z[\zeta_{16}+\zeta_{16}^7]$, such that
    \[p=(a_{p}+b_{p}\sqrt{2})(a_{p}-b_{p}\sqrt{2})=\sigma_1(\alpha)\sigma_{15}(\alpha)\sigma_3(\alpha)\sigma_{13}(\alpha).\] 
    In $\Z[\zeta_{16}+\zeta_{16}^7]$, we have
    \[(p)=(a_{p}+b_{p}\sqrt{2})(a_{p}-b_{p}\sqrt{2})=(\sigma_1(\alpha))(\sigma_{15}(\alpha))(\sigma_3(\alpha))(\sigma_{13}(\alpha)).\]
    Without loss of generality, we suppose 
    \[
    \mathfrak{p}\cap\Z[\zeta_{16}+\zeta_{16}^7]=(\sigma_1(\alpha)).
    \]
     Note $\Z[\zeta_{16}+\zeta_{16}^7]^{\times}=\langle1+\sqrt{2},-1\rangle$. Writing generators in $(\sigma_1(\alpha))$ as $\alpha(1+\sqrt{2})^n(-1)^k$ with $n,k\in\Z$, we have
    \begin{align*}
        &||\Sigma_{\Q(\zeta_{16}+\zeta_{16}^7)}(\alpha(-1)^k(1+\sqrt{2})^n)||^2\\
        =&2|\alpha(1+\sqrt{2})^n|^2+2|\sigma_3(\alpha(1+\sqrt{2})^n)|^2\\
        =&2(1+\sqrt{2})^{2n}|\alpha|^2+2(1-\sqrt{2})^{2n}|\sigma_3(\alpha)|^2\\
        =&2(1+\sqrt{2})^{2n}(a_p+b_p\sqrt{2})+2(1-\sqrt{2})^{2n}(a_p-b_p\sqrt{2}).
    \end{align*}
    Then, similarly with the proof of Theorem~\ref{theorem: p916}, we can conclude that $\alpha$ is the shortest vector in $\mathfrak{p}\cap\Z[\zeta_{16}+\zeta_{16}^7]$ with  length  $\sqrt{4a_p}$ and is also the shortest vector in $\mathfrak{p}$ with  length of \(\sqrt{2^{n}a_p}.\)

\end{proof}

Based on Algorithm~\ref{Algorithm-ap} and Theorem~\ref{theorem:p716}, we can obtain more efficient classical and quantum algorithms for computing $\lambda_1\left(\Sigma_{\Q(\zeta_{2^{n+1}})}(\mathfrak{p})\right)$. 

\noindent\textbf{The upper bound of the length of the shortest vector.} The analysis is similar with Subsection~\ref{subsection:upperBoundp916}. We conclude that
\[
\lambda_1\left(\Sigma_{\Q(\zeta_{2^{n+1}})}(\mathfrak{p})\right)< \sqrt[4]{2^{2n+1}p}.
\]
    The corresponding Minkowski bound is as follows:
    $$\lambda_1\left(\Sigma_{\Q(\zeta_{2^{n+1}})}(\mathfrak{p})\right)<\sqrt{2^{n}}\det(\mathfrak{p})^{\frac{1}{2^n}}=\sqrt{2^{n}}\left(N(\mathfrak{p})\sqrt{\Delta\Q({\zeta_{2^{n+1}}})}\right)^{\frac{1}{2^n}}
    =2^n\sqrt[4]{p}.$$
    Note that $\sqrt[4]{2^{2n+1}p}=2^{\frac{2n+1}{4}}\sqrt[4]{p}<2^n\sqrt[4]{p}$. So the upper bound we propose is tighter than the upper bound obtained from Minkowski's Theorem.
\section{Conclusions}
This paper presents a detailed analysis of optimization SVP for prime ideals in power-of-two cyclotomic fields $\Z[\zeta_{2^{n+1}}]$, focusing on primes $p\equiv7,9\pmod{16}$. By combining the principal ideal approach and the decomposition subfield approach, it is demonstrated that to find the shortest vector in an ideal lattice is reduced to find a generator of that ideal restricted to  specific subrings. 
specifically, we provide a precise characterization of the length of the shortest vector for prime ideals in $\Z[\zeta_{2^{n+1}}]$ lying over $p\equiv7,9\pmod{16}$, given by $\sqrt{2^{n}a_p}$.
We show a new and tighter upper bound of $\sqrt[4]{2^{2n+1}p}$ for this length, improving upon the bound derived from Minkowski's Theorem.

It is interesting to explore whether the generator-shortest vector correspondence holds in other number fields or for more general ideals. Extending this analysis to other classes of primes and different cyclotomic fields would also be a natural next step. 

\bibliographystyle{splncs04}
\bibliography{idealSVP}
\end{document}